\theoremstyle{plain}
\newtheorem{theorem}{Theorem}
\newtheorem{lemma}[theorem]{Lemma}
\newtheorem{corollary}[theorem]{Corollary}
\date{}
\newcommand{\D}{\mathcal{D}}
\newcommand{\C}{\mathcal{C}}
\newcommand{\CD}{\tilde \D}
\newcommand{\dom}{\text{Dom}}
\newcommand{\opt}{\text{OPT}}
\newcommand{\polylog}{\text{polylog}}
\newcommand{\ID}{\text{OnlyBy}}
\newtheorem{invariant}{Invariant}
\title{Dominating Sets and Connected Dominating Sets in Dynamic Graphs}
\author{Niklas Hjuler\thanks{University of Copenhagen, Denmark, 
\textit{hjuler@di.ku.dk}}, 
Giuseppe F. Italiano\thanks{LUISS University, Rome, Italy, 
\texttt{gitaliano@luiss.it}}, 
Nikos Parotsidis\thanks{University of Rome Tor Vergata, Italy
\texttt{nikos.parotsidis@uniroma2.it}},
David Saulpic\thanks{D\'{e}partement 
d'Informatique, \'{E}cole Normale Sup\'{e}rieure, 
Paris, 
\texttt{david.saulpic@ens.fr}}
}
\date{}
\newtheoremstyle{rstate}%
{}{}%
{\itshape}{}%
{\bfseries}{}
{ }{}
\theoremstyle{rstate}
\theoremstyle{definition}
\newtheorem{definition}{Definition}
\begin{document}

\maketitle
\begin{abstract}
 In this paper we study the dynamic versions of two basic graph problems: Minimum Dominating Set and its variant
 Minimum Connected Dominating Set. For those two problems, we
 present algorithms that maintain a solution under edge insertions and edge deletions
 in time $O(\Delta\cdot \polylog~n)$ per update, where $\Delta$ is
 the maximum vertex degree in the graph.
 In both cases, we achieve an approximation ratio of $O(\log n)$, which is 
 optimal up to a constant factor (under the assumption that $P \ne NP$). Although those two problems have been widely studied in the static
 and in the distributed settings, to the best of our knowledge we are the first to present efficient algorithms in the dynamic setting.
 
 As a further application of our approach, we also present an algorithm that maintains a Minimal Dominating Set in $O(min(\Delta, \sqrt{m}\,))$ per update.
\end{abstract}

\newpage
\section{Introduction}
The study of dynamic graph algorithms is a classical area in algorithmic research and has 
been thoroughly investigated in the past decades. Maintaining a solution of a 
graph problem in the case where the underlying graph changes dynamically over time is a big challenge in the design of efficient and practical algorithms. Indeed, 
in several applications, due to the dynamic nature of today's data, it is not sufficient to compute a solution to a graph 
problem only once and for all: often, it is necessary to \emph{maintain} a solution efficiently while the input graph is undergoing a sequence of dynamic updates. 
More precisely, a \emph{dynamic graph} is a sequence of graphs $G_0, ..., G_M$ on $n$ nodes and such that $G_{i+1}$ is obtained from $G_i$ by adding or removing a single edge. The natural first barrier, in the study of dynamic algorithms, is to design algorithms that are able to maintain a solution for the problem at hand after each update faster than recomputing the solution from scratch. Many dynamic graph problems such as minimum spanning forests
(see e.g. \cite{Holm2001Poly,nanongkai2017dynamic}), shortest paths \cite{Demetrescu},
matching \cite{bernstein2019deamortization,neiman2016simple,solomon2016fully} or coloring \cite{bhattacharya2018dynamic} have been extensively studied in the literature, and very efficient algorithms are known for those problems. 
Recently, a lot of attention has been devoted to the {\sc Maximal Independent Set} problem (MIS). In this problem, one wishes to find a maximal set of vertices that do not share any edge (``maximal'' meaning that it is not possible to add any vertex without violating this property). Until very
recently, the best known update bound on the complexity to maintain a MIS was a 
simple $O(\Delta)$ algorithm, where $\Delta$ is an upper bound on the degree of vertices in the graph.
This bound was first broken by Assadi et al. \cite{assadistoc} who gave a
$O(m^{3/4})$ algorithm, then by Gupta and Khan \cite{gupta2018simple} improved 
the update bound to $O(m^{2/3})$. Very recently, using randomization, Assadi et al. \cite{assadi2019fully} presented an amortized fully-dynamic algorithm with an expected $\widetilde{O}(n^{1/2})$-time bound per update.

The MIS problem is closely related to the {\sc Dominating Set} (DS) problem: given a graph $G=(V,E)$
the DS problems is to find a subset of vertices $\D\subseteq V$ such that every vertex in 
$G$ is adjacent to $\D$ (or \emph{dominated} by $\D$).
Indeed, a MIS is also a Minimal DS:
the fact that it is not possible to add a vertex without breaking the independence property implies that every vertex is adjacent to the MIS, so this must be also a DS; at the same time, 
it is not possible to remove a vertex since that vertex is no longer dominated. Thus, to find a Minimal DS one can simply find a MIS: this gives immediately a
deterministic $O(m^{2/3})$  \cite{gupta2018simple} bound and a randomized 
$\widetilde{O}(n^{1/2})$ \cite{assadi2019fully} one. However, while it is known that is 
hard to approximate {\sc Maximum Independent
	Set}\footnote{It is not possible to find a polynomial-time algorithm that finds a
	$n^{1-\epsilon}$-approximation to {\sc Maximum Independent Set} under the 
	assumption NP $\ne$  ZPP}
within a factor $n^{1-\epsilon}$ for every $\epsilon > 0$\cite{haastad1999clique}, 
a simple greedy approach achieves a $O(\log n)$-approximation for Minimum DS \cite{greedysetcover}. 

In recent years, there has been a lot of work on designing dynamic graph algorithms for maintaining approximate solutions to several problems. 
A notable example is matching, where for different approximations there exist 
different algorithms (see e.g.,  \cite{bernstein2019deamortization,bernstein2016faster,neiman2016simple, gupta2013fully, bhattacharya2018deterministic,solomon2016fully}).
This raises the natural question on whether 
there exists a dynamic algorithm capable of 
maintaining an approximation to Minimum DS, and even better a $O(\log n)$ 
approximation. In this paper, we answer this question affirmatively by presenting 
an algorithm that achieves a $O(\log n)$ 
approximation, with a complexity matching the long standing $O(\Delta)$ bound for MIS.
Moreover, if one is interested in finding a DS faster, we present a very simple 
\emph{deterministic} $O(m^{1/2})$ algorithm to compute a Minimal DS, improving the
$O(m^{2/3})$ bound coming from MIS.
We believe these are important steps towards understanding the complexity of the problem. 
Those two results are stated below.

\begin{theorem}\label{thm:minimum}
	Starting from a graph with $n$ vertices, a $O(\log n)$ approximation of 
	{\sc Minimum Dominating Set} can be maintained over any sequence of $\Omega(n)$ edge 
	insertions and deletions in $O(\Delta \log n)$ amortized time per update, where $\Delta$ 
	is the maximum degree of the graph over the sequence of updates.
\end{theorem}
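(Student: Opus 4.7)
The plan is to combine the classical greedy $O(\log n)$-approximation for the static \textsc{Minimum Dominating Set} problem, rerun periodically in batch, with a cheap per-update local repair routine. The driving observation is that a single edge update changes both $\opt$ and $|\D|$ by at most one, so a freshly computed greedy solution of size at most $(1+\ln n)\cdot \opt$ remains an $O(\log n)$-approximation for roughly $\Theta(\opt)$ subsequent updates. The algorithm therefore alternates between global rebuilds via greedy and fast local patches triggered by individual edge insertions and deletions.

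For the local repair I would maintain, alongside the adjacency lists of $G$, for every vertex $v$ the set $N(v)\cap \D$ stored in a balanced binary search tree and, for every $v\notin \D$, a dominator pointer $d(v)\in N(v)\cap \D$ certifying the domination of $v$. An edge insertion only updates these auxiliary structures in $O(\log n)$ time, and domination is preserved for free. An edge deletion $(u,v)$ that invalidates $d(u)=v$ triggers a scan of $N(u)$ for another element of $\D$ in $O(\Delta\log n)$ time; if none is found, $u$ itself is added to $\D$ and the $\D$-counters of its $O(\Delta)$ neighbors are updated. Thus each update costs $O(\Delta\log n)$ in the worst case and adds at most one vertex to $\D$, while the dominator pointers serve as a certificate that the current $\D$ is a valid dominating set.

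The global rebuild would use the textbook bucket-priority-queue implementation of greedy, which runs in $O(m)=O(n\Delta)$ time and returns $|\D|\leq(1+\ln n)\,\opt$. A rebuild is scheduled whenever the number of updates since the previous one exceeds a suitable fraction of $|\D_{T_0}|/\log n = \Theta(\opt_{T_0})$, so that between consecutive rebuilds both $|\D|$ and $\opt$ drift by at most a constant fraction of $\opt_{T_0}$ and the invariant $|\D|\leq O(\log n)\cdot\opt$ is preserved throughout. The main obstacle is bounding the \emph{total} rebuild cost over a long update sequence: a naive schedule that rebuilds every $\Theta(\opt)$ operations pays $O(\Delta^2)$ amortized in the worst case $\opt=\Theta(n/\Delta)$. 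To bring this down to $O(\Delta\log n)$ I would exploit the $\Omega(n)$-update guarantee of the statement together with a doubling-style trigger, performing a rebuild only when $|\D|$ has grown by a constant factor since the last one; since $|\D|$ can only be doubled $O(\log n)$ times before saturating at $n$, the total rebuild cost over any sequence of $N=\Omega(n)$ updates is $O(n\Delta\log n)=O(N\Delta\log n)$, matching the claimed amortized bound. Reconciling this doubling schedule with the approximation invariant above (which would naively ask for much more frequent rebuilds when $\opt$ is small) is the most delicate point of the argument and is where I expect the analysis to require the most care.
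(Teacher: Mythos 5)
The gap you flag at the end of your write-up is real and, as far as I can see, fatal to the rebuild-based approach; it is not a detail that ``requires care'' but the place where the argument breaks. The two schedules you propose fail in complementary ways, and there is no obvious interpolation between them. The frequent schedule (rebuild every $\Theta(\opt_{T_0})$ updates) gives the approximation guarantee but costs $\Omega(\Delta^2)$ amortized when $\opt=\Theta(n/\Delta)$ and greedy happens to return a near-optimal solution, since then $m/\opt_{T_0}=\Theta(n\Delta)/\Theta(n/\Delta)=\Theta(\Delta^2)$; this exceeds the claimed $O(\Delta\log n)$ whenever $\Delta\gg\log n$. The doubling schedule gives the time bound but loses the approximation: start from the empty graph (so $\opt=n$, greedy returns $\D=V$) and insert the $n-1$ edges of a star. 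Your local repair never removes a vertex from $\D$ on insertion and never adds one, so $|\D|$ stays at $n$, the doubling trigger never fires, yet $\opt$ has dropped to $1$ --- the maintained ratio is $\Theta(n)$, and the sequence has length $\Omega(n)$ as required by the theorem. The root cause is structural: under your local repair $|\D|$ only grows and $\opt$ only shrinks (under insertions), the sole mechanism pulling them back together is the global rebuild, and the rebuild is too expensive to run as often as the approximation invariant demands. Any fix needs a way for edge insertions to \emph{shrink} the solution locally, which your scheme does not have.

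This is precisely what the paper's construction supplies, and why it takes a completely different route. Instead of a greedy baseline plus rebuilds, it adapts the dynamic set-cover framework of Gupta et al.: each dominator $v$ carries a set $\dom(v)$ of the vertices it is responsible for and sits at a level $l$ with $|\dom(v)|\in[2^{l-10},2^l]$, and a local \emph{stability} condition (no vertex $v$ and level $l$ with $|N(v)\cap V_l|>2^l$) certifies, level by level, that the solution has at most $2^{10}\opt$ pairs per level and hence size $O(\log n)\cdot\opt$ at all times --- no global recomputation is ever needed, and restoring stability after an insertion is exactly the mechanism that promotes vertices to larger dominating pairs and deletes emptied ones. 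The amortization is then a token/potential argument bounding level changes by $O(\log n)$ per update, each implemented in $O(\Delta)$ time. If you want to salvage your plan, you would essentially have to reinvent such a locally checkable certificate of near-optimality; short of that, I would recommend switching to the leveled approach.
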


\begin{theorem}\label{thm:minimal}
	Starting from a graph with $n$ (fixed) vertices, a  
	{\sc Minimal Dominating Set} can be deterministically maintained over any sequence of edge 
	insertions and deletions in $O(\sqrt{m})$ amortized time per update, where $m$ is 
	an upper bound on the number of edges in the graph.
\end{theorem}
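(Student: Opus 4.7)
The plan is to maintain $\D$ together with, for each $v \in V$, the dominator count $c(v) = |N[v]\cap \D|$ and, for each $w \in \D$, the number of its \emph{private witnesses}, i.e.\ vertices $u \in N[w]$ with $c(u) = 1$. Since $\D$ is a minimal dominating set iff every $w \in \D$ has a positive witness count, only two repair actions are ever needed: when $c(v)$ drops to $0$ we add some vertex of $N[v]$ to $\D$ to re-dominate $v$, and when the witness count of some $w \in \D$ drops to $0$ we remove $w$ from $\D$. Each edge update refreshes these counters in $O(1)$, while adding or removing $z$ to/from $\D$ costs $O(\deg z)$ to propagate the change across its closed neighborhood.

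To keep the repair actions cheap I would use a heavy/light decomposition with threshold $t = \Theta(\sqrt m)$: a vertex is \emph{heavy} if its degree exceeds $t$, otherwise \emph{light}. There are only $O(\sqrt m)$ heavy vertices, so every vertex of degree larger than $2t$ must have a light neighbor. For each $v$ I maintain a list $L(v)$ of its light neighbors; a standard hysteresis argument keeps $L(\cdot)$ up to date in $O(1)$ amortized time per edge update, because a vertex's status flips only after $\Omega(\sqrt m)$ incident updates. Whenever some $v$ becomes undominated, I always add to $\D$ a vertex of degree $O(\sqrt m)$: either $v$ itself if $v$ is light, or a light neighbor of $v$ fetched from $L(v)$ in $O(1)$ time. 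This way the addition half of each repair always costs $O(\sqrt m)$ directly.

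The delicate part is removing a redundant $w$, whose direct cost $O(\deg w)$ can be as large as $\Omega(n)$ when $w$ happens to be heavy at that moment. To pay for it I would use the potential $\Phi = \sum_{w \in \D} \max(0,\deg w - t)$. Each edge update changes $\Phi$ by $O(1)$, and an addition does not increase $\Phi$ since we only add light vertices; when a (possibly heavy) $w$ is removed, $\Phi$ drops by $\max(0,\deg w - t)$, absorbing all but $O(t)=O(\sqrt m)$ of the actual removal cost. Hence each single repair has amortized cost $O(\sqrt m)$.

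The main obstacle I anticipate is a cascading effect: adding a single vertex $z$ to $\D$ can invalidate the witness of several distinct $w_u \in \D$ simultaneously (one per $u \in N[z]$ whose count jumps from $1$ to $2$), so a single update may trigger arbitrarily many removals and no per-update bound on the number of repairs is possible. The resolution is a global accounting: throughout the sequence the number of additions to $\D$ is at most the number of edge deletions (at most one repair-addition per update, since a removal can only increase witness counts and never creates a new undominated vertex), and the number of removals is at most the number of additions because $|\D|\geq 0$ holds throughout. Combined with the per-repair amortized bound, the total work over any sequence of $T$ updates starting from an empty graph is $O(T\sqrt m)$, giving the claimed $O(\sqrt m)$ amortized bound per update.
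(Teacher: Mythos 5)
Your proposal matches the paper's construction essentially step for step: your dominator/witness counters are the paper's $N_\D(v)$ and $\ID(v)$, your light-vertex rule for re-domination is the same $\sqrt m$-threshold trick, your potential $\Phi$ is equivalent to the paper's per-vertex budget that is set to $\deg(v)$ on insertion into $\D$ and incremented by incident edge insertions, and the global accounting (at most one addition per edge deletion, removals bounded by additions) is identical. The only quibble is the boundary case $t < \deg(v) \le 2t$, where $v$ is not light yet is not guaranteed a light neighbor, so ``add $v$ if light, else take something from $L(v)$'' can fail as stated; the paper sidesteps this (and the need for the lists $L(v)$ altogether) by adding $v$ itself whenever $\deg(v) \le 2\sqrt m + 1$ and otherwise finding a degree-$\le\sqrt m$ neighbor by scanning $2\sqrt m + 1$ neighbors, which costs only $O(\sqrt m)$ anyway.
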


We also study the {\sc Minimum Connected 
	Dominating Set} problem (MCDS), which adds the constraint that the 
graph induced by the DS $\D$ must be connected. This problem was first introduced 
by Sampathkumar and Walikar \cite{sampathkumar1979connected} and arises in several 
applications. The most noteworthy is its use as a \emph{backbone} in routing 
protocols: it allows to limit the number of packet transmissions, by sending 
packets only along the backbone rather than throughout the whole network. 
Du and Wan's book \cite{duwan} summarizes the knowledge about MCDS. A special class 
of graphs is geometric graphs, where vertices are points 
in the plane, and two vertices are adjacent if they fall within a certain range 
(say, their distance is at most 1). This can model wifi transmissions, 
and the dynamic MCDS had been studied in this setting: a polynomial-time approximation
scheme is known \cite{cheng2003polynomial}, and Guibas et al. 
\cite{guibas2013connected} show how to maintain a constant-factor approximation 
with polylogarithmic update time. While geometric graphs model 
problems linked to wifi transmissions, the general graph setting can be also seen 
as a model for wired networks. However, no work about dynamic MCDS is known in 
this setting: the static case is well studied, with a greedy algorithm developed by
Guha and Keller \cite{guha1998approximation} that achieves an approximation factor 
$O(\ln \Delta)$. They also show a lower bound matching their complexity, 
together with their approximation factor. MCDS had also been thoroughly studied in the distributed
setting (see e.g. a heuristic to find a Minimal CDS in \cite{butenko2004new},
another one that sends $O(\Delta n)$ messages and has a time complexity at 
each vertex $O(\Delta^2)$ \cite{wu1999calculating} or a $3\log n$ approximation that 
runs in $O(\gamma)$ rounds where $\gamma$ is the size of the CDS found, 
with time complexity $O(\gamma \Delta^2 + n)$ and message complexity 
$O(n\Delta \gamma + m + n \log n)$ \cite{bevan1997routing}). Despite all this work, 
no results are known in the dynamic graph setting. As another application of our approach, 
we contribute to filling this gap in the research line of MCDS.
In particular, in this paper we show  how our algorithm for Minimum DS can be adapted in a non-trivial way
to maintain a $O(\log n)$ approximation of the MCDS in general dynamic graphs.

\begin{theorem}\label{thm:connected}
	Starting from a graph with $n$ vertices, a $O(\log n)$ approximation of 
	{\sc Minimum Connected Dominating Set} can be maintained over any sequence of $\Omega(n)$ edge 
	insertions and deletions in $\widetilde O(\Delta)$ amortized time per update.
\end{theorem}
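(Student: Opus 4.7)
My approach builds on Theorem~\ref{thm:minimum}: first maintain a dominating set $\mathcal{D}$ of size $O(\log n)\cdot|\mathcal{D}^*|$, and then augment it with a small set of \emph{connector} vertices so that the induced subgraph becomes connected. The structural tool is the auxiliary graph $H$ on vertex set $\mathcal{D}$ in which $u\sim_H v$ whenever $\text{dist}_G(u,v)\le 3$. I first argue that $H$ is connected whenever $G$ is: given $u,v\in\mathcal{D}$, take any $G$-path $u=w_0,\dots,w_k=v$ and let $d_i\in\mathcal{D}$ dominate $w_i$; consecutive $d_i,d_{i+1}$ then lie at $G$-distance at most $3$, since each is adjacent to one endpoint of the edge $w_iw_{i+1}$. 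A spanning tree $T$ of $H$ has $|\mathcal{D}|-1$ edges, each realizable by a $G$-path of length at most $3$ contributing at most $2$ internal connector vertices, so $\mathcal{D}$ together with all chosen connectors forms a CDS of size at most $3|\mathcal{D}|=O(\log n)\cdot|\mathcal{D}^*|\le O(\log n)\cdot|\mathcal{C}^*|$, matching the claimed approximation ratio.

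Dynamically, the algorithm represents $H$ by a compact set of witnesses: for every non-DS vertex $v$ with $c(v):=|N(v)\cap\mathcal{D}|\ge 2$ we add a star in $H$ centered at one of its DS-neighbors and connecting the rest through $v$, and for every $G$-edge $xy$ with $x,y\notin\mathcal{D}$ and $c(x),c(y)\ge 1$ we add one $H$-edge between arbitrarily chosen DS-neighbors of $x$ and of $y$. A spanning forest of $H$ is maintained by the Holm--Lichtenberg--Thorup data structure at $O(\log^2 n)$ per $H$-edge change, and every forest tree edge is tagged with its underlying $G$-witness so that the union of witnesses gives the current connector set. On each $G$-update I first let Theorem~\ref{thm:minimum} adjust $\mathcal{D}$ in $O(\Delta\log n)$ amortized time, then update the counts $c(v)$ for every $v$ in the neighborhood of a vertex that entered or left $\mathcal{D}$ (an $O(\Delta)$-sized change), recompute the affected star and edge witnesses, and push the induced insertions and deletions into the spanning-forest structure; when a tree edge disappears we rely on HLT's replacement search to produce an alternative $H$-edge and rebuild the corresponding connector path.

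The main obstacle is keeping the total number of $H$-edge changes and their HLT maintenance cost within the $\widetilde O(\Delta)$ budget. A single DS addition or removal locally rewires at most $O(\Delta)$ star edges, but the edge witnesses responsible for $3$-hop connections can be affected in $O(\Delta^2)$ ways through the $2$-hop neighborhood of the changed DS vertex, so a naive recomputation blows the budget. To stay within budget I would keep only one $3$-hop witness per pair of forest components at any moment, discovering new ones only through the lazy replacement searches triggered by HLT tree-edge deletions and charging their cost against the HLT level-based potential. Reconciling that potential with the fact that the underlying $H$ itself is moving as $\mathcal{D}$ evolves---so that an edge's level must be reinitialized in a controlled way whenever its witness becomes invalid---is what I expect to be the most technically delicate step, since it is where the $\widetilde O(\Delta)$ amortized bound, the $\Omega(n)$-update hypothesis, and the $O(\log n)$-approximation guarantee of Theorem~\ref{thm:minimum} all have to interact simultaneously.
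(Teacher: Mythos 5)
Your approximation argument is sound and is a genuinely different (and arguably cleaner) route than the paper's: you take a spanning tree of the auxiliary graph $H$ on $\D$ with edges for pairs at $G$-distance at most $3$, getting a CDS of size at most $3|\D|$, whereas the paper never builds a power graph --- it maintains a minimal connector set $\C$ directly in $G$ via a weighted MST (weight $1$ inside $\C\cup\D$, weight $m$ outside) and proves $|\C|\le 2|\D|$ by a charging argument on a tree whose nodes are components of $\D$ and vertices of $\C$. The problem is that your dynamic maintenance of $H$ does not close, and you say so yourself. A single vertex entering or leaving $\D$ (which happens $O(\log n)$ amortized times per update by Corollary~\ref{cor:num_change}) can invalidate or create $\Theta(\Delta^2)$ of your three-hop edge witnesses, since these live in the two-hop neighborhood of the changed vertex. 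Your proposed fix --- keep only one witness per pair of spanning-forest components and recover lazily through HLT replacement searches --- does not work as stated: the Holm--de~Lichtenberg--Thorup replacement search only examines non-tree edges that have already been \emph{inserted} into the structure, so if you deliberately withhold all but one witness, then when that witness dies there is nothing for the search to find, and discovering a fresh three-hop connection between the two separated components requires scanning two-hop neighborhoods from scratch, which is $\Theta(\Delta^2)$ again. Moreover HLT's $O(\log^2 n)$ amortized bound is funded by a potential paid at edge insertion time, so even a correct lazy scheme must account for re-inserting witnesses whose levels have been reset; you identify this as the delicate point but supply no argument. (A smaller omission: your witness family lacks the direct $G$-edges between two $\D$-vertices, without which the witnessed subgraph need not span the same components as $H$.)

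The paper's design is shaped precisely to dodge this quadratic churn: connectivity is monitored on $G$ itself, so a change to $\C\cup\D$ only requires reweighting the $O(\Delta)$ incident edges in the dynamic MST, and after an edge deletion at most two new connector vertices ever need to be added (any two components of $G[\CD]$ are joined by a path of length at most $2$ because $\D$ dominates everything). The price the paper pays instead is on the \emph{removal} side --- keeping $\C$ minimal, i.e.\ detecting connectors that are no longer articulation points --- which is where its technically heavy machinery (the $nc(\cdot)$ values in a top-tree, piggybacking on the Recover routine) lives. If you want to salvage your $H$-based approach, the missing idea is a witness scheme whose total number of edge changes over the update sequence is $\widetilde O(\Delta)$ amortized per graph update; as written, that bound is asserted rather than proved, and it is exactly the crux of the theorem.
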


We further show how to maintain independently a Dominating Set $\D$ and a set 
of vertices $\C$ such that the induced subgraphs on the vertices $\C \cup \D$ is connected. The set $\C$ has the 
additional property that $|\C| \leq 2 |\D|$, such that $|\C \cup \D| = O(|\D|)$. 
If $\D$ is a $\alpha$-approximation of Minimum DS, this gives a 
$O(\alpha)$ approximation for MCDS.

\paragraph*{Further Related Work.}
It is well known that finding a Minimum DS is NP-hard \cite{GareyJ79}. It is 
therefore natural to look for approximation algorithms for this problem. Unfortunately, 
it is also NP-hard to find a 
$c \log n$ approximation, for any $0 < c < 1$ \cite{feige1998threshold}. 
This bound is tight, since there is a simple greedy algorithm matching this bound \cite{greedysetcover}. 
Minimum DS had been studied extensively in distributed computing: an algorithm which runs in 
$O(\log n \log \Delta)$  rounds finds a  $O(\log n)$ approximation with high 
probability \cite{jia2002efficient} and an algorithm with constant number of rounds 
achieves a non-trivial approximation\cite{kuhn2005constant}. 

The DS problem is closely related to the {\sc Set Cover} problem: the two problems are 
equivalents under L-reduction \cite{kann1992approximability}. However, {\sc Set 
	Cover} was studied in the dynamic setting \cite{gupta2017online, addanki2018fully}, 
but with different kinds of updates: instead of
edges being inserted or deleted (which would represent new elements in the sets according
to the L-reduction), new elements are being added to the cover (which would be new vertices in DS).

\medskip

\noindent{\bf Outline.}
The rest of the paper is organized as follows. First, we present an algorithm 
for Minimum DS, which will be used later on also for MCDS: we start by a $\widetilde O(n)$ 
algorithm, and then show how to overcome its bottleneck in order to 
achieve a $\widetilde O(\Delta)$ complexity. Finally, we present our 
$O(\sqrt m)$ algorithm for Minimal DS.

\section{A $O(\log n)$ approximation of {\sc Minimum Dominating Set} in $O(\Delta \log 
	n)$ time per update}\label{sec:minimum}

This section aims at proving Theorem~\ref{thm:minimum}.
Following a reduction from Set Cover, minimum DS is NP-hard to 
approximate within a factor $\log n$ \cite{feige1998threshold}. Here we present a matching upper bound 
(up to a constant factor), in the dynamic setting. 
Our algorithm relies heavily on the clever set cover algorithm by Gupta et al.  \cite{gupta2017online}. While in the static setting Set Cover is equivalent to 
minimum DS, in the dynamic setting these two problems are different. More precisely, in the dynamic Set Cover problem one is asked to cover a set of points $S$ (called the universe) with a  given family of sets $F$, while the set $S$ is changing dynamically. To draw the parallel with DS, in the latter the set $S$ is the 
set of vertices of the graph (which does not change) and for every vertex the 
set of its neighbors is in $F$. The dynamic part concerns therefore $F$, and 
not the universe $S$.

Gupta et al. present an $O(\log n)$-approximation for dynamic Set Cover problem: in what follows, we show how to adapt their algorithm to the DS case, with an update time of $O(\Delta \log n)$. 
As in \cite{gupta2017online}, the approach easily adapts to the weighted case. Unfortunately, 
this cannot be generalized to MCDS, therefore we do not consider this property 
of the algorithm. The following definitions are partly adapted from \cite{gupta2017online}.

\subsection{Preliminaries}
For a vertex $v$, let $N(v)$ be the set of its neighbors, including $v$. 
The algorithm maintains a solution $S_t$ at time $t$ such that an element of 
$S_t$ is a pair composed of
\begin{itemize}
	\item a dominant vertex $v$
	\item a set $\dom(v) \subseteq N(v)$, which are the vertices that are 
	dominated by $v$. We call $|\dom(v)|$ the \emph{cardinality} of the pair.
\end{itemize}
We call a \emph{dominating pair} an element of $S_t$. The algorithm requires that
multiple copies of a vertex can appear as the dominant vertex of a pair. 
However, each vertex is exactly in one $\dom(v)$. The solution to
the DS problem is composed of all vertices that appear as dominant vertices of a pair. 
Since each vertex is in exactly one $\dom(v)$, each vertex is dominated and 
therefore the set of dominants is a valid solution to the DS problem.

The dominating pairs are placed into \emph{levels} according to their cardinality: the level $l$ is defined by a range $R_l := [2^{l-10}, 2^l]$, and each
pair $(v, \dom(v))$ is placed at an appropriate level $l$ such that $|\dom(v)| \in R_l$. In 
that case, elements of $\dom(v)$ are said to be dominated at level $l$; we 
denote by $V_l$ the set of all vertices dominated at level $l$. 
We say that an assignment of levels is valid if it respects the constraint $|\dom(v)| \in 
R_l$.
This allows us to define the notion of \emph{Stability}:
\begin{itemize}
	\item \emph{stable solution}: A solution $S_t$ is stable if there is no 
	vertex $v$ and level $l$ such that $|N(v) \cap V_l| > 2^l$; in other words, 
	it is not possible to introduce a new vertex in the solution to dominate some 
	vertices at level $l$ such that the resulting dominating pair could be at 
	level strictly greater that $l$.
\end{itemize}

The algorithm will dynamically maintain a stable solution $S_t$, with a valid 
assignment of levels. Note that the ranges $R_l$ overlap: this gives some slack to the algorithm, which allows enough flexibility
to prevent too many changes while our algorithm maintains a valid solution.

\subsection{The algorithm}
The main part of the algorithm is the function {\sc Stabilize}, which restores the 
stability at the end of every update. The function is the following (see \cite{gupta2017online}):
\begin{quote}
	{\sc Stabilize}. As long as a vertex $v$ violates the stability condition at level $l$, do the following: Add the 
	pair $(v, N(v) \cap V_l)$ to the \emph{lowest} possible level $j$ (i.e., the 
	lowest level such that $|N(v) \cap V_l| \in R_j$); Remove the elements of $N(v) 
	\cap V_l$ from the set of their former covering pair: if it gets empty, remove the
	pair from the solution. Otherwise, if the cardinality of such a pair 
	goes below $2^{l-10}$, put it at the \emph{highest} possible level. 
\end{quote}

\smallskip
\noindent \textbf{Edge addition:} When a new edge $(u, 
v)$ is added to the graph, one just need to ensure that the solution remains stable, and thus the 
algorithm runs {\sc Stabilize}.

\medskip
\noindent \textbf{Edge deletion:} When an edge $(u, v)$ is removed from the graph, we proceed as follows.  If 
neither $u$ nor $v$ dominates the other endpoint, the solution remains 
valid and stable, and nothing needs to be done. Otherwise, assume without loss of generality that  $v$ dominates $u$. Then: 
\begin{itemize}
	\item Remove $u$ from $\dom(v)$
	\item Add the pair $(u, \dom_u = \{u\})$ to the solution with level 1
	\item Run {\sc Stabilize}
\end{itemize}

\subparagraph*{Correctness.} All the nodes of the graph are dominated at every 
time. Indeed, {\sc Stabilize} does not make any node undominated and if a vertex is 
not dominated after an edge removal, the algorithm simply adds it to the 
solution. Therefore, the solution $S_t$ maintained by the algorithm is a valid one.

\subsection{Analysis}

\noindent{\bf Approximation ratio.}
We use the following lemma by 
Gupta et 
al.~\cite{gupta2017online}
to bound the cost
of a stable solution.
\begin{lemma}[Lemma 2.1 in \cite{gupta2017online}]\label{lem:stablecost}
	The number of sets at one level in any stable solution is at most $2^{10}\cdot \opt$.
\end{lemma}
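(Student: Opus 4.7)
The plan is to double-count $|V_l|$, the number of vertices dominated at level $l$, once using the disjoint structure of the maintained solution and once using an optimal dominating set. Let $p_l$ denote the number of pairs at level $l$. I would first observe that the sets $\dom(v)$ over pairs $(v,\dom(v))$ at level $l$ partition $V_l$, because every vertex belongs to exactly one $\dom(\cdot)$ set and $V_l$ is by definition the set of vertices dominated by some level-$l$ pair. Validity of the level assignment gives $|\dom(v)| \ge 2^{l-10}$ for each such pair, and this immediately yields the lower bound $|V_l| \ge p_l \cdot 2^{l-10}$.

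For the matching upper bound I would fix an optimal dominating set $\D^*$ with $|\D^*| = \opt$ and charge $V_l$ to $\D^*$: since $\D^*$ dominates the whole graph, $V_l \subseteq \bigcup_{u \in \D^*} N(u)$, and therefore $|V_l| \le \sum_{u \in \D^*} |N(u) \cap V_l|$. The stability condition provides exactly the per-vertex cap I need, namely $|N(u) \cap V_l| \le 2^l$ for every vertex $u$ and every level $l$, so summing over the $\opt$ vertices of $\D^*$ gives $|V_l| \le \opt \cdot 2^l$.

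Combining the two bounds, $p_l \cdot 2^{l-10} \le |V_l| \le \opt \cdot 2^l$, and rearranging yields $p_l \le 2^{10}\cdot \opt$, as claimed.

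I do not anticipate any real obstacle: the argument is essentially a \emph{pigeonhole with slack}, and the only conceptual ingredient is recognizing that the stability condition has been tailored precisely to bound the contribution of each optimal vertex at each level. The one thing to watch is bookkeeping around the two endpoints of $R_l = [2^{l-10}, 2^l]$: the lower endpoint feeds the per-pair lower bound on $|V_l|$, the upper endpoint feeds the per-optimal-vertex upper bound via stability, and the constant $2^{10}$ in the statement is exactly their ratio.
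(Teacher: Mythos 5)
Your proof is correct and is exactly the standard double-counting argument behind Lemma~2.1 of Gupta et al., which the paper simply cites rather than reproving: the level-$l$ sets $\dom(v)$ partition $V_l$ and each has size at least $2^{l-10}$ by validity, while stability caps $|N(u)\cap V_l|$ at $2^l$ for each vertex of an optimal solution, giving $p_l\cdot 2^{l-10}\le |V_l|\le \opt\cdot 2^l$. No gaps; the only implicit assumption (that the stable solution comes with a valid level assignment) is exactly what the algorithm maintains.
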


Since for every dominating pair 
$(v, \dom(v))$ we have that $1 \leq |\dom(v)| \leq n$, there are only $\log n$ levels that 
can contain a set. The total cost of a stable solution is therefore 
$O(\log n \cdot \opt)$.

\medskip

\noindent{\bf A token scheme to bound the number of updates.}
Unfortunately, the analysis of Gupta et al. cannot be 
applied directly to the case of DS, due to the different nature of the updates. However, we can build upon their analysis, as follows. 
We first bound the number of vertices that change
level, and then explain how to implement a level change so that it costs 
$O(\Delta)$. We prove the following lemma by using a token argument. 

\begin{lemma}\label{lem:lvl_change}
	After $k$ updates of the algorithm, at most $O(k\log n + n\log n)$ elements 
	have changed levels.
\end{lemma}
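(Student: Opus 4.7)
The plan is to establish this bound via an amortized argument that mirrors the token/potential-function analysis of Gupta et al.\ \cite{gupta2017online} for dynamic Set Cover. The aim is to exhibit a potential $\Phi \geq 0$ such that $\Phi$ is initially $O(n \log n)$, each edge update raises $\Phi$ by at most $O(\log n)$ (outside of Stabilize), and each element-level change performed during Stabilize is accompanied by at least a constant drop in $\Phi$. Telescoping these bounds over the sequence of $k$ updates then yields the claimed $O(k \log n + n \log n)$ count of level changes.

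A natural first attempt is $\Phi_0 = \sum_u (L - l(u))$, where $l(u)$ is the level of the pair currently dominating vertex $u$ and $L = \lceil \log n \rceil + 10$ upper bounds the maximum possible level. Clearly $0 \leq \Phi_0 \leq nL = O(n \log n)$. An edge insertion modifies no $l(u)$ directly (it merely enlarges the relevant neighborhoods, triggering Stabilize), while an edge deletion can at worst promote a newly orphaned vertex into a singleton pair at level $1$, raising $\Phi_0$ by at most $L-1 = O(\log n)$. Thus between Stabilize calls the growth of $\Phi_0$ is $O(\log n)$ per update, and the total externally-driven growth over $k$ updates is $O((n+k)\log n)$.

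The key step, and the main obstacle, is the analysis of Stabilize itself. Each iteration of Stabilize creates a new pair at some level $j \geq l+1$ absorbing $c > 2^l$ elements previously dominated at level $l$; these $c$ moves-up alone decrease $\Phi_0$ by $c(j-l) \geq c$, which pays for the corresponding $c$ element-level changes. The difficulty is the cascade of moves-down: the old pairs from which the $c$ elements are extracted may have their cardinality fall below $2^{l-10}$ and be forced to relocate to some level $l' < l$, producing a level-change for each of their remaining elements and pushing $\Phi_0$ upward by $(l-l')$ per remaining element. To absorb these cascades I would refine $\Phi_0$ by adding a second term that stockpiles tokens on the elements of each pair when that pair is created (roughly proportional to $|\dom(p)|$ weighted by a function of $l(p)$), exactly as in \cite{gupta2017online}; this works because the range $R_l = [2^{l-10}, 2^l]$ spans a factor of $2^{10}$, so between two consecutive forced re-levelings of the same pair a substantial number of its elements must be lost, freeing enough stockpiled credit to pay for the resulting cascading down-moves. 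A case analysis of each Stabilize iteration, adapted from \cite{gupta2017online}, then certifies that the refined $\Phi$ decreases by at least the number of element-level changes it produces, which combined with the $O(\log n)$ per-update growth yields the bound.
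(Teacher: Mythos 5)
Your proposal is the paper's own argument in potential-function clothing: the paper uses exactly this token scheme (each vertex starts with $2\log n$ tokens, an orphaned vertex after an edge deletion receives $2\log n$ fresh tokens, each level change costs one token, and the invariant ``a vertex at level $l$ holds more than $2(\log n - l)$ tokens'' plays the role of your $\Phi_0$). Your treatment of the external growth and of the up-moves is correct and matches the paper. However, the part you defer --- ``a case analysis adapted from \cite{gupta2017online} then certifies that the refined $\Phi$ decreases'' --- is precisely the content of the lemma, and as sketched it does not yet close. Two concrete issues. First, your $\Phi_0$ releases only $(j-l)\ge 1$ units per element on an up-move, which exactly pays for that element's own level change and leaves nothing to deposit into the stockpile of the pair it abandons; you need a coefficient of $2$ per level (as in the paper's invariant), so that each up-moving element can both pay for its move and leave one credit behind in its former pair's ``emergency fund.'' Second, the verification that the stockpile suffices for a cascading demotion requires the two placement rules of {\sc Stabilize} that your sketch never invokes: a new pair is placed at the \emph{lowest} valid level, which forces $|\dom_{init}(v)| > 2^{l-1}$, and a shrunken pair is re-placed at the \emph{highest} valid level after falling below $2^{l-10}$, which forces the demotion to span $j > 9$ levels. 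These give at least $2^{l-1}-2^{l-j}$ deposited credits against a cost of $(2j+1)$ per each of the fewer than $2^{l-j}$ remaining elements (one to pay for the move, $2j$ to restore the invariant at the lower level), and the inequality $2^{j-1}-1 \ge 3(2j+1)$ for $j\ge 10$ is what makes the books balance. Without pinning down where the factor $2^{10}$ in the range definition enters, the claim that ``a substantial number of elements must be lost'' does not by itself bound the per-element cost of the demotion, which grows linearly in the number of levels dropped.
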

\begin{proof}
	We use the following token scheme, where each vertex pays one token for each level change. In the beginning, we give $2\log n$ tokens to every vertex. If a vertex is undominated after an edge removal, we give 
	$2\log n$ new tokens to this vertex. Since at most one vertex gets undominated for each edge deletion, the total number of tokens given after $k$ updates is  
	$O(k\log n + n\log n)$.  To prove the lemma, we need to show that at any time
	each vertex has always a positive 
	amount of tokens. 
	We adapt the proof of Gupta et al. to show the following invariant: 
	\begin{invariant}
		Every vertex at level $l$ has more than $2(\log n - l)$ tokens.
	\end{invariant}
	
	When a vertex is moved to a higher level, it pays one token for the cost of 
	moving. It also saves one token, and gives it to an ``emergency fund''
	of its former covering pair. Each pair has therefore a fund of tokens that 
	can be used when the pair has to be moved to a lower level.
	
	When the pair $(v,\dom(v))$ has to be moved from level $l$ to level $l-j$, it 
	means that a lot of vertices have left $\dom(v)$ and that the tokens they gave to the pair
	can be used to pay for the operation. Formally, we want to pay one token for 
	every vertex in $\dom(v)$ for its level change, but we also want to restore the 
	invariant. We need therefore $2j + 1$ tokens for each vertex of $\dom(v)$. 
	Since the pair can be moved to level $l-j$, this means that $|\dom(v)| < 2^{l-j}$. 
	Since 
	a new pair is moved to the lowest possible level, this pair could not be at level 
	$l-1$, which implies that $|\dom_{init}(v)| > 2^{l-1}$ where $\dom_{init}(v)$ is 
	the set $\dom(v)$ at the time where it was created. Moreover, each of 
	the vertices that left gave one token: the amount of tokens usable is therefore 
	bigger than $2^{l-1} - 2^{l-j}$. Thus we want to prove that 
	$2^{l-1} - 2^{l-j} \geq (2j +1)\cdot|\dom(v)|$. It is enough to have $2^{l-1} 
	- 2^{l-j} \geq 3 \cdot (2j + 1) 2^{l-j}$, i.e. to have $2^{j-1} - 1 \geq 3(2j + 
	1)$. But since the pair was moved to level $l-j$, it means that $|\dom(v)| > 
	2^{l-j-1}$ and $|\dom(v)| < 2^{l-10}$: putting these two equations together 
	gives $j > 9$, 
	which ensures that $2^{j-1} - 1 \geq 3(2j + 1)$ and concludes the proof.
\end{proof}

As the following corollary shows, we can
bound the number of changes to $\D$ to 
$O(\log n)$ amortized.
This property will be useful in Section \ref{sec:simple}.

\begin{corollary}\label{cor:num_change}
	After $k$ updates of the algorithm, at most $O(k\log n + n\log n)$ vertices can be 
	added to or removed from $\D$.
\end{corollary}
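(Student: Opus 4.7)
The plan is to reduce changes of $\D$ to level changes and then invoke Lemma~\ref{lem:lvl_change}. Since $\D$ is defined as the set of vertices that appear as the dominant vertex of some pair, a vertex can be added to (resp.\ removed from) $\D$ only when a new pair is created (resp.\ when the last pair with that dominant vertex is deleted). Hence it suffices to bound the total number of pair creations and deletions over the course of the $k$ updates, since the number of changes to $\D$ is upper bounded by this quantity.

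I would first bound pair creations. Inspecting the algorithm, a new pair is introduced in only two ways: (i) when an edge deletion leaves a vertex $u$ undominated and the singleton pair $(u,\{u\})$ is created, which contributes at most one addition per update and thus at most $k$ overall; and (ii) inside \textsc{Stabilize}, when the violation of stability at some level $l$ forces the creation of a pair $(v,N(v)\cap V_l)$ placed at the lowest valid level $j$. The key observation for case (ii) is that by definition of the stability violation, $|N(v)\cap V_l|>2^l$, while the placement at level $j$ requires $|N(v)\cap V_l|\le 2^j$; this forces $j>l$, so every element of $N(v)\cap V_l$ strictly increases its level at this moment. Consequently each such pair creation is witnessed by at least one level change, and Lemma~\ref{lem:lvl_change} yields that the number of creations of type (ii) is at most $O(k\log n + n\log n)$. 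Combined with the trivial bound on (i), the total number of additions to $\D$ is $O(k\log n + n\log n)$.

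For removals, I would use the standard accounting argument: a pair can be removed only after it has first been created, and the number of pairs present initially is at most $n$. Therefore the total number of pair removals over the sequence is bounded by the initial number of pairs plus the total number of pair creations, which is $n + O(k\log n + n\log n) = O(k\log n + n\log n)$. Combining additions and removals gives the claimed bound. The only subtlety is the verification that $j>l$ in the \textsc{Stabilize} step, which I expect to be the main (and essentially only) conceptual check — everything else is bookkeeping on top of Lemma~\ref{lem:lvl_change}.
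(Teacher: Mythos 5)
Your proof is correct and follows essentially the same route as the paper: the paper's (two-line) argument likewise charges every addition to or removal from $\D$ to a level change of some dominated element and then invokes Lemma~\ref{lem:lvl_change}. Your explicit verification that the new level $j$ strictly exceeds $l$ (so that each pair creation in \textsc{Stabilize} really is witnessed by level changes), and your separate accounting of creations versus deletions, just spell out the details the paper leaves implicit.
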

\begin{proof}
	Whenever a vertex is added to or removed from $\D$,  its level is changed. Lemma 
	\ref{lem:lvl_change} gives the corresponding bound.
\end{proof}

We now turn to the implementation of the function {\sc Stabilize}. As shown in
the next lemma, we implemented so that its cost is $O(\Delta)$ 
for each element that changes level.

\begin{lemma}\label{lem:stablecomplex}
	A stable solution can be maintained in $O(\Delta \log n)$ amortized time per update.
\end{lemma}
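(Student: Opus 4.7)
The plan is to show that (i) each level change can be implemented in $O(\Delta)$ time, and (ii) the amortized number of level changes per update is $O(\log n)$, which combined with Lemma~\ref{lem:lvl_change} yields the bound.

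First, I would describe the data structures. For every vertex $u$ and every level $l$ on which $u$ has at least one neighbor, maintain a counter $c_{u,l} = |N(u) \cap V_l|$, stored in a hash table so that the nonzero entries take $O(\Delta \log n)$ total space per vertex. In addition, for each vertex $u$ maintain a pointer to the dominating pair that covers it, and for each pair $(v,\dom(v))$ maintain $\dom(v)$ as a doubly linked list so that insertion and removal are $O(1)$. Finally, maintain for each level $l$ a queue $Q_l$ of candidate violators, i.e. vertices $w$ that might satisfy $c_{w,l} > 2^l$.

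Next, I would analyse the cost of a single level change of a vertex $u$ from level $l$ to level $l'$. All affected counters are $c_{w,l}$ and $c_{w,l'}$ for $w \in N(u)$, so updating them costs $O(\Delta)$. Whenever a counter $c_{w,l'}$ crosses the threshold $2^{l'}$, I push $w$ onto $Q_{l'}$. Inserting the new pair or removing the old one is $O(1)$ given the linked list, so the total work per level change is indeed $O(\Delta)$. The {\sc Stabilize} routine then repeatedly pops a vertex from some $Q_l$, verifies in $O(1)$ that it still violates the stability condition using the stored counter, and, if so, creates the pair $(v, N(v)\cap V_l)$ at the lowest feasible level $j$. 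Building this pair requires enumerating $N(v)\cap V_l$ once and moving every listed vertex from its old pair into $\dom(v)$; each such move is accounted for as one level change of that vertex and costs $O(\Delta)$ by the argument above. A demoted former-pair is likewise treated as a single level change. Edge insertion just enqueues $u$ and $v$ into the appropriate $Q_l$ and calls {\sc Stabilize}; edge deletion additionally creates the singleton pair $(u,\{u\})$ at level $1$ in $O(\Delta)$ time when needed. Hence the total work in {\sc Stabilize} is $O(\Delta)$ plus $O(\Delta)$ per level change that it triggers.

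Finally, I would combine the two ingredients. Over any sequence of $k = \Omega(n)$ updates, Lemma~\ref{lem:lvl_change} bounds the total number of level changes by $O(k\log n + n\log n) = O(k\log n)$. Multiplying by $O(\Delta)$ per level change and adding the $O(\Delta)$ of direct work per update gives total cost $O(k\Delta\log n)$, i.e. amortized $O(\Delta\log n)$ per update, as claimed.

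The main obstacle I anticipate is the bookkeeping around {\sc Stabilize}: I must ensure that finding a violator really costs $O(1)$ amortized (so that stale entries in the queues $Q_l$ are charged to the level changes that placed them there), and that cascading demotions of the emptied pairs are each counted only once as a level change so that the charging argument from Lemma~\ref{lem:lvl_change} still applies.
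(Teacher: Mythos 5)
Your proposal is correct and follows essentially the same route as the paper: maintain the cardinalities $|N(v)\cap V_l|$ so that each level change costs $O(\Delta)$ (informing the neighbors), keep a lazily-verified list of candidate violators for {\sc Stabilize}, and multiply by the $O(\log n)$ amortized level changes from Lemma~\ref{lem:lvl_change}. Your version just spells out the data structures (hash tables, linked lists, queues) in more detail than the paper does.
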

\begin{proof}
	For all vertices $v$ and all levels $l$, the algorithm maintains the set 
	$N(v)\cap V_l$ and its cardinality. Every time a vertex changes its level, it has to 
	inform all its neighbors: this can be done in $O(\Delta)$. When an edge $(u, v)$
	is added to or removed from the solution, the algorithm updates the sets $N(v)\cap 
	V_{l_u}$ and $N(u)\cap V_{l_v}$, where $l_u$ and $l_v$ are the levels of $u$ 
	and $v$, respectively. 
	
	During a call to {\sc Stabilize}, the algorithm maintains also a list of vertices 
	that may have to be added to restore the stability: for a vertex $v$ and 
	level $l$, every time that $N(v)\cap V_l$ changes, if the new 
	cardinality violates the stability, we add $v$ to this list in constant time. The algorithm processes the list vertex by vertex: it checks that the 
	current vertex still needs to be added to the solution, and add it if necessary.
	
	Since we pay $O(\Delta)$ per level change and there are $O(\log n)$ amortized changes, the amortized complexity of each update is $O(\Delta \log n)$. 
\end{proof}

Since a stable solution gives a $O(\log n)$ approximation to minimum DS, 
Lemmas \ref{lem:stablecost} and \ref{lem:stablecomplex} yield the proof
of Theorem~\ref{thm:minimum}: a $O(\log n)$ approximation of Minimum Dominating 
Set can be maintained in $O(\Delta \log n)$ amortized time per update.

\section{A $O(\log n)$ Approximation for {\sc Minimum Connected Dominating Set} in 
	$\widetilde O(n)$ per update}\label{sec:simple}

A possible way to compute a Connected DS is simply to find a 
DS and add a set of vertices to make it connected. Section 
\ref{sec:minimum} gives an algorithm to maintain an approximation of the 
Minimum DS: we will use it as a black box (and refer to it as the ``black box''), 
and show how to make its 
solution connected without losing the approximation guarantee. If the original
graph is not connected, the algorithm finds a CDS in every connected component:
we focus in the following on a single of these components. Let $\D$ be the DS 
maintained, and $\C$ be a set of vertices such that $\C \cup \D$ is connected 
and $\C$ is minimal for that property.
The minimality of $\C$ will ensure that $|\C| \leq 2 |\D|$: since $\D$ is a 
$O(\log n)$ approximation of MDS, this leads to a $O(\log n)$ approximation 
for MCDS. Note that the vertices of $\C$ are not used for domination: $\C 
\cup \D$ is therefore not minimal, but still an approximation of minimum.

Overall, we will apply the following charging scheme to amortize the total 
running time. The main observation is that although a lot of vertices can be deleted
to restore the minimality of $\C$, only a few can be added at every step.
We thus give enough potential to a vertex whenever it is added 
into $\C$ and whenever its neighborhood changes, so that at the time 
of its removal from $\C$ it has accumulated enough potential for scanning its entire 
neighborhood. After an edge deletion we might have to restore the 
connectivity requirement. We do that by adding at most 2 new 
vertices in $\C$: this is crucial for our amortization argument. 

\bigskip
\textbf{Outline:} The set $\C$ may have to be updated for two reasons: 
\begin{itemize}
	\item Restore the connectivity: if an edge gets deleted from the graph, or if 
	the black box removes some vertices from $\D$, it may be necessary to add some 
	vertices to $\C$ in order to restore the connectivity of $\C \cup \D$.
	\item Restore the minimality of $\C$: when an edge is added to the graph, or 
	when a vertex is added to $\C \cup \D$ (either by the black box or in order to restore 
	the connectivity), some vertices of $\C$ may become useless and therefore need 
	to be removed.
\end{itemize}
We now address those two points. All our bounds are expressed in term 
of the total number of changes in $\C \cup \D$: let therefore $k$ be this number of 
changes. We will show later that, after $t$ updates to the graph, $k = O(t\log n)$.

\bigskip
The first phase of the algorithm is to restore the connectivity. We explain in the following
how to decide which vertices should be added to $\C$ for that purpose.

\paragraph*{Restore the connectivity after an edge deletion.}
To monitor the connectivity requirement, we use the following idea. The algorithm
maintains a minimum spanning tree (MST) of the graph $G$ where a weight $1$ is assigned
to the edges between vertices in $\C \cup \D$ (called from now on $\CD$), and 
weight $m$ is assigned to all other edges. These weights ensure that, 
as long as $\CD$ is connected, the MST induces a tree on $\CD$.
When $G[\CD]$ gets disconnected by an update, the MST uses a
vertex of $V \setminus \CD$ as an internal vertex: in that case, our 
algorithm adds this vertex to $\C$, to restore the connectivity. 
We give more details in the next section.

The edge weights are updated as the graph undergoes edge insertions and deletions 
and vertices enter or leave $\CD$. The MST of the 
weighted 
version of the graph has the following properties.

\begin{itemize}
	\item If $\CD$ is a connected DS, then the MST has weight 
	$(|\CD|-1)+m\cdot|V\setminus \CD|$ (Kruskal's algorithm on this graph 
	would use $|\CD|-1$ edges of weight 1 to construct a spanning tree on 
	$\CD$, then $|V\setminus \CD|$ edges of weight $m$ to span the entire 
	graph).
	\item If $\CD$ is a DS but $G[\CD]$ is not connected, then the 
	weight of the MST has larger value.
\end{itemize}

The two properties stem from the fact that a MST can be produced by finding a
minimum spanning forest on $\CD$  and extend it 
to a MST on $V$. Kruskal's algorithm ensures that this leads to a MST. 
In the case where $\CD$ is
connected, the first step yields a tree of weight $\CD - 1$, and since the 
graph is connected the second step yields a cost $m\cdot|V\setminus \CD|$. However,
if $\CD$ is not connected, the second step adds strictly more that 
$|V\setminus \CD|$ edges, therefore yielding a cost bigger than $m\cdot(1+|V\setminus \CD|)$.
This is more than $(|\CD|-1)+m\cdot|V\setminus \CD|$, as claimed.

Furthermore, if $G[\CD]$ has two connected components $C_1,C_2$, then the 
shortest of all paths between vertices $u,v,~u\in C_1, v\in C_2$ is the minimum 
number of vertices whose insertion into $\C$ restores the connectivity 
requirement.
Note that the 
shortest of all such paths must have length at most $2$ (otherwise, there must be a vertex 
not adjacent to any vertex in $\D$, 
which contradicts the fact that $\D$ is 
a DS). 

After an edge deletion, it may happen that $\CD$ becomes disconnected and that 
the MST includes some internal vertices (at most 2, by the previous discussion) not in 
$\CD$: in that case, we add them to $\C$. This turns out to be enough to ensure 
the connectivity.

To maintain the MST of the weighted version of the input graph we use the 
$O(\log^4 n)$ update time fully-dynamic MST algorithm from \cite{Holm2001Poly}.
Since the weights of the edges incident to the vertices that enter or leave 
$\CD$ are also updated, the algorithm runs in time $\widetilde{O}(\Delta)$ 
for each change in $\CD$, i.e. in time $k\cdot \widetilde O(\Delta)$

\medskip

\noindent{\bf 
	Restore the connectivity when a vertex is deleted by the black box.}
\label{par:connec}
When a vertex $v$ is deleted from $\D$ by the black box 
DS algorithm, we 
need to be more careful: updating the edge weights  and finding the new MST may add a lot of vertices to $\C$ (as 
many as $\Delta$, one per edge of the MST incident to $v$). However, if the 
removal of $v$ disconnects $G[\CD]$, it is enough to add $v$ to $\C$ to restore 
the connectivity. If its removal does not disconnect $G[\CD]$, nothing needs to be done. It is possible to know if the graph $G[\CD]$ gets disconnected using 
the properties of the MST, by only looking at the weight of the MST. The 
complexity of this step is therefore $\widetilde O(\Delta)$, the time needed
to update the weights of the MST.

\medskip

\noindent{\bf Restore the minimality.}
The second phase of the algorithm is to restore the minimality of $\C$. We explain 
next how to find the vertices of $\C$ that need to be removed to accomplish this task.
This minimality condition is equivalent to the condition that all vertices in $\C$ 
are \emph{articulation points} in the graph induced by $\C \cup D$. 
(An articulation point is a vertex such that its removal increases the number of 
connected components.) This turns out to be useful in order to identify which vertices need to 
be removed to restore the minimality of $\C$.

To restore the connectivity requirement, new vertices were added into $\C$, and 
the black box added some vertices to $\D$: this might result in some 
vertices in $\C$ not being articulation points of $G[\CD]$ anymore. As observed 
before, these are the vertices that need to be removed. We need to identify a 
maximal set of such vertices that can be removed from $\C$ without violating the 
connectivity requirement.
To do this, 
the algorithm queries in an 
arbitrary order one-by-one all the vertices $v\in \C$ to determine whether 
$G[\CD \setminus v]$ is connected. This can be done using a data structure from
Holm et al.~\cite{Holm2001Poly} that requires $\widetilde O(1)$ per query.
Whenever the algorithm identifies a vertex such that $G[\CD \setminus v]$ is 
connected, it can safely remove it from $\C$. The complexity of this step is 
therefore $\widetilde O(n)$ to find all articulation points, and an extra
$\widetilde O(\Delta)$ for each of the vertices we remove from $\C$.

The following three lemmas conclude the proof: the first shows that 
the algorithm is correct, the second the $\widetilde O(n)$ time bound and the third the $O(\log n)$ approximation ratio.

\begin{lemma}\label{lem:slow}
	The algorithm that first restores the connectivity of $\C \cup \D$ and then the 
	minimality of $\C$ is correct: it gives a minimal set $\C$ such that 
	$\C \cup \D$ is connected.
\end{lemma}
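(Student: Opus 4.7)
My plan is to split the lemma into its two claims and handle them separately: $G[\CD]$ being connected at the end, and $\C$ being minimal. Connectivity should be almost immediate: phase~1 explicitly reconnects $G[\CD]$ (either by the MST-based addition of at most two vertices after an edge deletion, or by reinserting into $\C$ the vertex removed from $\D$ by the black box), and phase~2 only removes a vertex $u$ when $G[\CD \setminus u]$ has been verified to be connected, so a simple induction on the number of removals preserves connectivity.

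The heart of the proof, and the step I expect to be the main obstacle, is minimality. Since phase~2 is a \emph{single} pass over $\C$ in arbitrary order, a vertex $v$ might appear to be an articulation point at the moment it is queried (hence get kept) but cease to be one afterwards, as other vertices of $\C$ are stripped away around it. A priori this could leave the final $\C$ non-minimal, so the key is to explain why it cannot happen.

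The plan is to rule this out by contradiction, using crucially that $\D$ is a dominating set. Assume $v$ survives phase~2 yet $G[\CD \setminus v]$ is connected, and let $\CD_{t_v}$ be the state when $v$ was queried. Since phase~2 only shrinks $\C$, one has $\CD_{t_v} = \CD \cup S$, where $S$ are the vertices removed after $t_v$ (each in $\C$ at that moment, and by the maintained invariant $\C \cap \D = \emptyset$, none in $\D$). Because $v$ was kept, $G[\CD_{t_v} \setminus v]$ is disconnected; the connected subgraph $G[\CD \setminus v]$ lies in a single component $A$, so some other component $B$ is contained in $S$. Picking any $u \in B$, the dominating-set property gives a neighbor $d \in \D$ of $u$; since $v \notin \D$ we have $d \neq v$, so $d \in \CD \setminus v \subseteq A$, and then the edge $ud$ sitting inside $G[\CD_{t_v} \setminus v]$ forces $u \in A$ — contradicting $u \in B$. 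Thus the dominating-set property is exactly what prevents a ``dangling $\C$-subcomponent'' from sabotaging single-pass minimality.
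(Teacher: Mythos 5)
Your proof is correct, and on the one point that actually requires an argument it is substantially more careful than the paper's own. The connectivity half is identical to the paper's: phase~1 produces a spanning tree of $G[\CD]$, and phase~2 only deletes a vertex $u$ after verifying that $G[\CD\setminus u]$ stays connected, so induction does the rest. For minimality, however, the paper disposes of the single-pass issue with the one-line claim that ``a vertex that was not an articulation point cannot become one'' when vertices are only removed. As literally stated this addresses the wrong direction --- the danger is that a vertex that \emph{was} an articulation point when queried ceases to be one after later removals, leaving a useless vertex in $\C$ --- and the claim is in any case false for general vertex deletions (delete one vertex of a cycle and every internal vertex of the resulting path becomes an articulation point). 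Your contradiction argument supplies exactly the missing reasoning: a component $B$ of $G[\CD_{t_v}\setminus v]$ other than the one containing the final $\CD\setminus v$ would consist only of later-removed $\C$-vertices, each of which has a neighbor $d\in\D\setminus\{v\}\subseteq A$, which yields an edge from $B$ to $A$ and a contradiction. This correctly isolates the dominating-set property of $\D$ as the reason an arbitrary-order single pass suffices, which the paper never makes explicit.

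The only thing you assume without justification is the invariant $\C\cap\D=\emptyset$, which you need twice (to conclude $v\notin\D$ and $S\cap\D=\emptyset$). The paper never states it, but it is the natural reading of the construction: vertices enter $\C$ only from $V\setminus\CD$, and a vertex of $\C$ that the black box later promotes to $\D$ is no longer needed in $\C$ and should be discarded. A sentence verifying (or stipulating) this invariant would make your argument airtight; it is not a gap in substance.
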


\begin{proof}
	After restoring the connectivity requirement the algorithm maintains a 
	spanning tree of $\CD$, so $G[\CD]$ is indeed connected. 
	In the following steps, before the algorithm removes a vertex $v$ from $\C$, it 
	first verifies that $G[\CD \setminus v]$ remains connected, which guarantees that $G[\CD]$ is 
	connected at the end of the update procedure.
	Since the black box ensures that $\D$ is a DS, $\CD$ is a DS 
	too: hence at the end, $\CD$ satisfies both the domination and the connectivity 
	requirements.
	It remains to show that 
	$\C$ is minimal, i.e., that all 
	vertices in $\C$ are articulation points in $G[\CD]$. Since during the second 
	step the algorithm only removes vertices from $\C$, a vertex that was not an 
	articulation point cannot become one, and therefore the loop to find the 
	articulation points is correct.
	The set $\C$ is therefore a minimal set such that $\C \cup \D$ is connected.
\end{proof}

\begin{lemma}\label{lem:simple_complex}
	The amortized complexity of the algorithm is $\widetilde O(n)$ per update.
\end{lemma}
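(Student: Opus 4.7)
The plan is to break the per-update cost into three components---MST maintenance, connectivity restoration, minimality restoration---and amortize each against a suitable bound on the number of changes to $\CD$.

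First I would bound the total number of changes to $\CD$ over $t$ updates. By Corollary~\ref{cor:num_change}, the black box performs $O(t\log n + n\log n)$ additions/removals to $\D$. The connectivity-restoration phase adds at most $2$ vertices to $\C$ per edge deletion (a shortest reconnecting path has length at most $2$) and at most $1$ per vertex removed by the black box. Hence additions to $\C$ total $O(t\log n + n\log n)$, and removals are charged against prior additions plus the initially empty $\C$. Letting $k$ denote the total number of changes to $\CD$, we therefore have $k = O(t\log n + n\log n)$, i.e.\ $O(\log n)$ amortized per update.

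Next I would bound the MST-related cost. An edge insertion/deletion in $G$ triggers $O(1)$ weight changes, while each change to $\CD$ triggers the re-weighting of all $O(\Delta)$ edges incident to the entering/leaving vertex. Using the Holm et al.\ fully-dynamic MST of~\cite{Holm2001Poly} at $\widetilde O(1)$ per edge update, this phase (together with the checks of Section~3 for whether $G[\CD]$ is still connected, which only inspect the MST weight) costs $\widetilde O(\Delta)$ per update amortized.

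The dominant term comes from the minimality-restoration step, which scans every vertex $v \in \C$ and asks the Holm et al.\ connectivity structure whether $G[\CD\setminus v]$ is connected. Since $|\C| = O(n)$ and each query costs $\widetilde O(1)$, this step takes $\widetilde O(n)$ worst case per update. Each actual removal then costs an additional $\widetilde O(\Delta)$ for edge-weight maintenance, but by the bound on $k$ above there are only $O(\log n)$ such removals amortized, contributing $\widetilde O(\Delta)$ to the amortized bound. Summing the three contributions gives $\widetilde O(n) + \widetilde O(\Delta) = \widetilde O(n)$ amortized per update, as claimed. The only delicate point is making sure the charging from the connectivity phase (which may add vertices) and from the black box (which may remove vertices) is correctly folded into Corollary~\ref{cor:num_change}; everything else is a routine accounting exercise.
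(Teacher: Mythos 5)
Your proposal is correct and follows essentially the same route as the paper: bound the amortized number of changes to $\C \cup \D$ via Corollary~\ref{cor:num_change} plus the at-most-two-additions-per-deletion observation, charge $\widetilde O(\Delta)$ per change, and note that the $\widetilde O(n)$ scan over $\C$ for articulation points dominates. The only cosmetic difference is that the paper phrases the charging as a degree budget deposited when a vertex enters $\CD$ and updated on incident edge insertions, whereas you simply charge removals against prior additions; both yield the same bound here.
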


\begin{proof}
	The amortized cost of the black box to compute $\D$ is $\widetilde O(\Delta)$. 
	We analyze now the additional cost of maintaining $\CD$. As shown in this 
	section, the cost to add or delete a vertex from $\CD$ is $\widetilde O(\Delta)$.
	To prove the lemma, we bound the number of changes in $\CD$. For that,
	we count the number of vertices \emph{added} to $\CD$: in an amortized sense this
	bounds the number of changes too. Formally, we pay a budget $deg(v)$ when
	$v$ is added to $\CD$. Following insertions and deletions of edges adjacent to $v$,
	we update this budget (with a constant cost), so that when $v$ gets deleted from
	$\CD$ a budget equal to its degree is available to spend. 
	
	From Corollary \ref{cor:num_change}, the black box makes at most $\widetilde O(1)$ 
	changes to $\D$ per update (in an amortized sense). 
	If it removes a vertex from $\D$, we showed previously 
	that no new vertex is added to $\CD$. The number of additions to $\CD$ is therefore
	$\widetilde O(1)$. Moreover, in the case of an edge deletion, at most two 
	vertices are added to $\CD$ to maintain the connectivity. 
	Since restoring the minimality requires only to delete vertices, the total number
	of additions into $\CD$ is $\widetilde O(1)$. As the cost for any of these additions
	is $\widetilde O(\Delta)$, the total cost of this algorithm is upper bounded by
	the loop to find the articulation points, which is $\widetilde O(n)$.
\end{proof}

\begin{lemma}\label{lem:approx_cds}
	The algorithm maintains a $O(\log n)$ approximation for MCDS, i.e. $|\C \cup 
	\D| = O(\log n)\cdot \opt$
\end{lemma}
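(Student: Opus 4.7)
The plan is to combine two bounds: an approximation guarantee on $|\D|$ coming from Theorem~\ref{thm:minimum}, and a purely structural inequality $|\C|=O(|\D|)$ obtained from the minimality of $\C$ together with the fact that $\D$ is a dominating set. With both in hand the conclusion is immediate, since $|\C \cup \D|\leq |\C|+|\D|$.

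The first bound is easy: any connected dominating set is in particular a dominating set, so $\opt_{\text{MDS}}\leq\opt_{\text{MCDS}}$, and Theorem~\ref{thm:minimum} applied to the black box gives $|\D|\leq O(\log n)\cdot\opt_{\text{MDS}}\leq O(\log n)\cdot\opt_{\text{MCDS}}$.

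For the structural bound I would fix any spanning tree $T$ of $G[\CD]$ and exploit two properties. First, by minimality every $c\in\C$ is an articulation point of $G[\CD]$, hence of $T$, hence satisfies $\deg_T(c)\geq 2$; so every leaf of $T$ lies in $\D$, the number of leaves is at most $|\D|$, and the tree identity $\sum_v(\deg_T(v)-2)=-2$ bounds the number of vertices of degree $\geq 3$ by $|\D|-2$. Thus at most $2|\D|-2$ vertices of $T$ are not degree-2 internal nodes; all remaining vertices have degree exactly $2$ and form chains along tree paths. Second, since $\D$ dominates $G$, every $c\in\C$ has a neighbor $d\in\D$ in $G[\CD]$; whenever $c$ is an interior vertex of a chain of degree-2 $\C$-vertices, its two $T$-neighbors are both in $\C$, so the edge $cd$ is necessarily a non-tree edge of $G[\CD]$. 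Using this, a local edge-swap (add $cd$, remove a tree edge on the fundamental cycle it closes) strictly decreases the total length of degree-2 $\C$-chains in $T$; iterating until no such swap is available produces a spanning tree in which every maximal $\C$-chain has length at most $2$, whence $|\C|\leq 2(|\D|-1)\leq 2|\D|$.

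Combining the two bounds gives $|\C \cup \D|\leq 3|\D|=O(\log n)\cdot\opt_{\text{MCDS}}$, as required. The main obstacle is the structural step: carrying out the edge-swap argument cleanly and verifying that a suitable potential (such as total $\C$-chain length) strictly decreases at each swap, while preserving the fact that the spanning tree remains a spanning tree of $G[\CD]$. The constant $2$ is convenient but not essential, since any $O(1)$ bound on $|\C|/|\D|$ would suffice for the $O(\log n)$-approximation conclusion.
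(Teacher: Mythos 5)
Your overall strategy --- reduce everything to the structural bound $|\C|=O(|\D|)$ and combine it with $\opt_{\mathrm{MDS}}\le\opt_{\mathrm{MCDS}}$ --- matches the paper, but your route to the structural bound is genuinely different: the paper builds an auxiliary forest whose nodes are the connected components of $\D$ together with the vertices of $\C$, attaches each $\C$-vertex to a component it is needed to connect (possible by minimality), and then charges each $\C$-node to a $\D$-descendant at distance at most $2$; you instead work with a spanning tree of $G[\CD]$ directly. The solid parts of your argument are the first bound, the observation that a leaf of any spanning tree of $G[\CD]$ is not an articulation point (so all leaves lie in $\D$), and the degree identity bounding the branching vertices by $|\D|-2$.

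There are, however, two genuine gaps in the remaining step. First, the edge-swap iteration is not established and is arguably the wrong framing: you never specify which tree edge of the fundamental cycle of $cd$ is removed, and for an arbitrary choice the quantity ``total length of degree-$2$ $\C$-chains'' can increase (the endpoints of the removed edge lose a degree and may become new degree-$2$ $\C$-vertices elsewhere), so termination is not guaranteed. The clean version is a one-shot contradiction with no iteration: if $c_1,c_2,c_3$ are three consecutive degree-$2$ $\C$-vertices in $T$ and $d\in\D$ dominates $c_2$, then $c_2d$ is a non-tree edge whose tree path leaves $c_2$ through $c_1$ or $c_3$, say $c_1$; swapping $c_2c_1$ for $c_2d$ yields a spanning tree of $G[\CD]$ in which $c_1$ is a leaf, so $c_1$ is not an articulation point, contradicting the minimality of $\C$. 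Hence every run of consecutive degree-$2$ $\C$-vertices already has length at most $2$ in \emph{any} spanning tree. Second, your final inequality $|\C|\le 2(|\D|-1)$ does not follow from ``each maximal $\C$-chain has length at most $2$'': you must also bound the \emph{number} of such runs, and here the degree-$2$ vertices of $\D$ interleaved along the chains matter, since a single topological edge of the tree can host several short $\C$-runs separated by $\D$-vertices. Counting runs via the suppressed tree (at most $2|\D|-3$ topological edges) plus the at most $|\D|$ separating $\D$-vertices still yields $|\C|=O(|\D|)$, which, as you correctly note, suffices for the $O(\log n)$ conclusion --- but the constant $2$ claimed in your write-up is not delivered by your argument.
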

\begin{proof}
	We first prove that $|\C| \leq 2|\D|$, using the minimality of $\C$.
	Each vertex of $\C$ is there to connect some components of $\D$. Consider the 
	graph $(W, F)$ where vertices $W$ are either connected components of $\D$ or 
	vertices of $\C$, and the set $F$ of edges is constructed as follows. Start with a graph containing one vertex for each connected 
	component of $\D$, and add vertices of $\C$ one by one. When the vertex $v$ is 
	added, identify a node $u$ in $\D$ adjacent to $v$ such that adding the edge 
	$(u,v)$ to $F$ does not create a cycle: add to $F$ an edge between $v$ and the 
	node corresponding to the connected component containing $u$. It is always possible to find 
	such a vertex $u$, otherwise $v$ would not be necessary for the connectivity, which 
	would contradict the minimality of $\C$. This process gives a forest such that 
	every node of $\C$ is adjacent to a connected component of $\D$. Since $\C \cup 
	\D$ is connected, it is possible to complete $F$ to make it a tree, adding some 
	other edges.
	This tree has the two following properties.
	\begin{enumerate}
		\item The leaves are vertices that correspond to connected component of $\D$: 
		indeed, if a vertex of $\C$ was a leaf in this tree, it could be removed 
		without losing the connecting of $\C \cup \D$, which would contradict the 
		minimality of $\C$. 
		\item Any vertex of $\C$ is adjacent to a connected component of $\D$, by 
		construction of the forest.
	\end{enumerate}
	
	These properties ensure that for every subtree rooted at a vertex of $\C$, 
	there is a $\D$ vertex at distance at most 2 from the root: otherwise, the
	vertices at distance 1 from it would be from $\C$ and adjacent only to $\C$ vertices.
	Moreover, since a $\C$ vertex is not a leaf, it has necessarily some descendant and
	the reasoning applies.
	Therefore, by rooting the tree at an arbitrary vertex of $\C$, we can charge every
	$\C$ vertex to a $\D$ descendant at distance at most 2. As a $\D$ vertex can be
	charged only by an ancestor at most two levels above it, it is charged at most twice.
	This ensures that $|\C| \leq 2|\D|$.

	Moreover, since $\D$ is a $O(\log n)$ approximation of MDS, $|\D| = 
	O(\log n)\cdot \opt$. Putting things together, we have $|\C \cup \D| = |\C| + 
	|\D| = O(\log n)\cdot \opt$.
\end{proof}

Combining Lemmas \ref{lem:slow}, \ref{lem:simple_complex} and \ref{lem:approx_cds} 
proves our claim: there is a $\widetilde O(n)$ algorithm to maintain a 
$O(\log n)$ approximation of the Minimum Connected Dominating Set. 
The main bottleneck of this approach is the time spent by the algorithm in the 
second phase to query all vertices in $\C$ in order to identify the vertices that 
are no longer articulation points. In the next section we present an 
algorithm that overcomes this limitation and is able to  identify the necessary vertices 
more efficiently.

\section{A more intricate $\widetilde O(\Delta)$ algorithm to restore the 
	minimality of $\C$}

In this section we present a more sophisticated algorithm for implementing the 
phase that guarantees the minimality of the maintained connected dominating 
set. This gives a proof of Theorem~\ref{thm:connected}.
We focus on a single edge update: indeed, when a vertex is added to (or 
removed from) $\CD$, one can simply add (or remove) all its edges one by one. 
As in the analysis of the complexity in Lemma~\ref{lem:simple_complex}, the 
amortized number of changes in $\CD$ is $\widetilde O(1)$. We aim now at proving
that the time required for handling a single change is 
$\widetilde O(\Delta)$: for that, we treat edge insertions and deletions to $\CD$
one by one, and prove that any edge update can be done
in $\widetilde O(1)$, which
would prove the claimed bound. 
Our algorithm maintains another spanning forest $F$ of $G[\CD]$ (unweighted) 
using the algorithm from \cite{Holm2001Poly}.

\begin{lemma}\label{lem:oneremoval}
	The vertices of $\C$ that are not articulation points after the insertion 
	of the edge $(v, w)$ all lie on the tree path $v...w$ of $F$. Moreover, the 
	removal of any of these vertices results in the other vertices being 
	articulation points again.
\end{lemma}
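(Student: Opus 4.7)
My plan is to analyze both parts by tracking how the insertion of $(v,w)$ alters the connectivity of $G[\CD]$ relative to the spanning tree $F$.

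For the first part, I proceed by contrapositive. Let $x \in \C$ and suppose $x$ does not lie on the tree path from $v$ to $w$ in $F$. Then the path survives intact in $F \setminus x$, so $v$ and $w$ lie in the same connected component of $G[\CD] \setminus x$. By minimality of $\C$ prior to the insertion, $x$ was already an articulation point, so $G[\CD] \setminus x$ had at least two components. Adding the edge $(v,w)$ can only merge the component containing $v$ with the one containing $w$, but these coincide, so the number of components does not decrease. Hence $x$ is still an articulation point after the insertion, proving that every newly non-articulation vertex must lie on the tree path.

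For the second part, fix two distinct non-articulation $\C$ vertices $x, y$ on the tree path, and assume without loss of generality that $x$ appears between $v$ and $y$. Since $y$ was an articulation point before the insertion but ceases to be one afterwards, adding $(v, w)$ must glue together all components of $G[\CD] \setminus y$; this forces $G[\CD] \setminus y$ to consist of exactly two components $V_v \ni v$ and $V_w \ni w$, because any third component would survive after the addition and keep $y$ an articulation point. Analogously, removing $x$ from $G[\CD]$ yields exactly two components $U_v \ni v$ and $U_w \ni w$. Now consider the portion $B$ of the tree $F$ strictly between $x$ and $y$ along the tree path, together with all subtree branches hanging off it. Because $x$ sits on the $v$-side of $y$ in $F$, we have $B \subseteq V_v$, and because $y$ sits on the $w$-side of $x$ in $F$, we have $B \subseteq U_w$. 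Pick any $b \in B$: then $b \in V_v$, so $b$ is disconnected from $w$ in $G[\CD] \setminus y$, and $b \in U_w$, so $b$ is disconnected from $v$ in $G[\CD] \setminus x$. Consequently, in $G[\CD \setminus \{x, y\}]$ the vertex $b$ is connected to neither $v$ nor $w$. Adding back the edge $(v, w)$ merges only the components of $v$ and $w$, so $b$ remains in a component of its own, showing that removing $y$ from $G[(\CD \setminus x) \cup \{(v, w)\}]$ disconnects the graph. Therefore $y$ is an articulation point again in this graph.

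The step I expect to be most delicate is the degenerate situation where $x$ and $y$ are adjacent in $F$ with no subtree branches in between, so that $B$ is empty. In this case the argument above does not directly produce a witness vertex, and one has to look for a suitable branch hanging off $x$ on the $y$-side (assigned to $U_w$ in the two-component split of $G[\CD] \setminus x$) or off $y$ on the $x$-side (assigned to $V_v$). Verifying that such a branch must exist from the fact that $x$ and $y$ were articulation points before the insertion, and checking that it remains isolated from both $v$ and $w$ after removing both $x$ and $y$, is where the bulk of the technical work will lie.
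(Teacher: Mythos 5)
Your proof of the first claim is correct and essentially coincides with the paper's (contrapositive via the surviving tree path: if $x$ is off the path, $v$ and $w$ already lie in one component of $G[\CD]\setminus x$, so the new edge cannot merge the component that witnesses $x$ being an articulation point). For the second claim your argument is in fact \emph{more} careful than the paper's: you correctly deduce that each vanishing articulation point separated $G[\CD]$ into exactly two pieces, and your witness argument is sound whenever a vertex $b\in V_v\cap U_w$ exists (your set $B$ is a subset of $V_v\cap U_w$, and the argument works verbatim for any vertex of that intersection, so you may as well take $B=V_v\cap U_w$). The paper instead disposes of this claim in one sentence, asserting that after one removal the remaining cut vertices ``disconnect $v$ and $w$ again''---which is not literally true, since the freshly inserted edge $(v,w)$ keeps $v$ and $w$ adjacent; the real mechanism is exactly the one you identify, namely that some third piece of the graph wedged between the two cut vertices stays separated from both $v$ and $w$.

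The degenerate case you flag is, however, a genuine gap, and the branch you hope to exhibit need not exist: when $V_v\cap U_w=\emptyset$ the second assertion of the lemma is simply false. Concretely, take $\D=\{a,b\}$ with $a,b$ in different components of $G[\D]$, $\C=\{x,y\}$, and $G[\CD]$ equal to the path $a$--$x$--$y$--$b$; every invariant of the algorithm holds ($\D$ dominates, $\C$ is minimal, both $x$ and $y$ are articulation points), and such adjacent $\C$-vertices do arise, e.g.\ when the connectivity-restoration step inserts two internal vertices of a connecting path, or when a vertex demoted from $\D$ is re-inserted into $\C$ next to an existing $\C$-vertex. Now insert the edge $(a,b)$: $G[\CD]$ becomes a $4$-cycle, both $x$ and $y$ cease to be articulation points, yet after deleting $x$ the induced graph on $\{a,y,b\}$ has $y$ as a degree-one vertex, so $y$ is still removable. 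Here $V_v\cap U_w=\emptyset$ and no witness vertex can be found. So you have correctly located the weak point, but the resolution you anticipate (proving the branch exists) is unavailable; the paper's own proof silently skips this case as well. A fix must modify the algorithm rather than the argument---for instance, repeat the minimum-$nc$ query and removal along the path until the minimum exceeds $1$ (the number of iterations can be charged to earlier insertions into $\C$), or prove and maintain an additional invariant forbidding bridges between two $\C$-vertices.
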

\begin{proof}
	Let $G_b$ be the graph before the insertion of $(v, w)$, and $G_a$ be the one 
	after.
	Let $u$ be a vertex that is an articulation point in $G_b[\CD]$ but not in 
	$G_a[\CD]$. Suppose by contradiction that $u$ is not on the tree path 
	$v...w$: that means that $v$ and $w$ are connected in $G_b[\CD] \setminus 
	\{u\}$. Since $u$ is an articulation point in $G_b[\CD]$, $v$ is not connected 
	to some vertex $x$ in $G_b[\CD] \setminus \{u\}$. But as $v$ and $w$ are 
	connected in $G_b[\CD] \setminus \{u\}$, adding the edge $(v, w)$ does not 
	connect $v$ and $x$ and therefore $u$ is still an articulation point after the 
	insertion of the edge.
	Therefore, all the articulation points that can be removed are in the cycle 
	$v...w,v$. Since they are not articulation points in $G_a[\CD]$, they separate 
	$G_b[\CD]$ in only two components: one with $v$, the other with $w$. Therefore, 
	$v...w,v$ is the only cycle containing $v$ and $w$, and removing any 
	vertex from it make the articulation points of $G_b[\CD]$ be articulations point 
	in $G_a[\CD]$, because they disconnect $v$ and $w$ again.
\end{proof}

Lemma \ref{lem:oneremoval} allows us to focus 
on the following 
problem: find a vertex in $\C$ that is no longer an articulation point in 
$G[\CD]$ after the insertion of the edge $(v, w)$.
To achieve this, the algorithm maintains for each vertex $v \in \C$ the number 
$nc(v)$ of connected component of $G[\D \setminus v]$.
For $v \notin \C$ we set for convenience $nc(v)$ to be the number of connected component 
in $G[\D \setminus v]$ plus $n$. 
This information can be used as follows: when an edge $(v, w)$ is 
added, if for one vertex $u \in \C$ it holds $nc(u)=1$ then $u$ is 
removed from $\C$ (because it is no longer an articulation point). 
To identify such a vertex, the algorithm queries for the minimal value along 
the 
path $v...w$ in $T$: if the minimum value is $1$, the corresponding vertex is 
removed from $\C$. 
This removal makes all the other vertices of the set $\C$ articulation 
points again: by Lemma \ref{lem:oneremoval}, the cycle created by the insertion of $(v, w)$ is broken by  
the deletion of $u$ from $G[\CD]$ . 

Notice that we are only interested in the $nc(v)$ values of the vertices in 
$\C$, as  $nc(v) > n$ for $v \notin \C$. Since we compute a minimum and the 
values relevant are smaller than $n$, this is equivalent to ignoring $v$. 
The advantage of this offset is that when $v$ becomes part of $\C$, it is 
sufficient to decrease its value by $n$ to make it consistent.
We now show how to keep this value up to date after adding or removing an edge. 

\bigskip

\noindent{\bf Maintaining the $nc(v)$ values in a top-tree.}
For this purpose, we use the biconnectivity data structure from 
\cite{Holm2001Poly} (called \emph{top-tree}) on the subgraph $G[\CD]$. 
To avoid cumbersome notation, we pretend that we execute the algorithm on $G$, 
although the underlying graph on which we execute the algorithm is $G[\CD]$. 
We also assume that the number of vertices remains $n$ throughout the 
execution, 
which is simply implemented by removing from $G$ all incident edges from the vertices 
with no incident edges in $G[\CD]$.

We now briefly describe the approach of 
\cite{Holm2001Poly}.
The algorithm maintains a spanning forest $F$ of $G$ and assigns a level 
$\ell(e)$ to each edge $e$ of the graph. Let $G_i$ be the graph composed of $F$ 
and all edges of level at least $i$. The levels are attributed such that the 
following invariant is maintained:

\begin{invariant}
	\label{inv:size-of-bic}
	The maximal number of vertices in a biconnected component of $G_i$ is 
	$\lceil n/2^i\rceil$.
\end{invariant}

\noindent Therefore the algorithm  needs only to consider $\lceil \log_2 
n\rceil$ levels. 
Whenever an edge $(v, w)$ is deleted, one needs to find which vertices in the 
path $v...w$ in $F$ are still biconnected. We use the following notion to describe
the algorithm.

\begin{definition}
	A vertex $u$ is \emph{covered} by a nontree edge $(x, y)$ if it is
	contained in a tree cycle induced by $(x, y)$. 
	We say that a path $v...w$ is covered at level $i$ if every of its node is in a 
	tree cycle induced by an edge at level greater than $i$.
\end{definition}

\noindent
Mark that all the vertices that are covered by a given edge are in the same biconnected 
component.

When a non-tree edge $(v, w)$ is removed, it may affect the 
2-edge connected components along the tree-path $v...w$ in $T$.
To find which vertices are affected, the following algorithm is used in 
\cite{Holm2001Poly}.
It first marks the vertices in $v...w$ 
as no longer covered at level $\ell(v, w)$. Then, it iterates over edges 
$(x, y)$ that could cover $v...w$, i.e., the ones such that the intersection 
between $x...y$ and $v...w$ is not empty, and marks the vertices in this intersection
as covered. This step is explained in the following function, which is 
called for all level $i$ from $\ell(v, w)$ down to 0. $meet(v, w, 
x)$ is the intersection of the tree paths $v...w$, $v...x$ and $x...w$.

\begin{adjustwidth}{0.6cm}{}
	{\bf Recover($v, w, i$).} Set $u := v$, and iterate over the vertices of 
	$v...w$ towards $w$. For each value of $u$, consider each nontree edge $(q, r)$ 
	with $meet(q, v, w) = u$ and such that $u...q$ is covered at level $i$. If it 
	is possible without breaking Invariant 2, increase the level of $(q, r)$ to 
	$i+1$ and mark the edges of $q...r$ covered at level $i+1$. 
	Otherwise, mark them covered at level $i$ and stop. 
	If the phase stopped, start a second symmetric phase with $u = w$ and iterating 
	on $w...v$ towards $v$.
\end{adjustwidth}

\noindent
As shown in \cite{Holm2001Poly}, this is correct and runs in $O(\log^4)$ amortized time.

\begin{figure}[t] 
	\centering
	\includegraphics[width=0.5\textwidth]{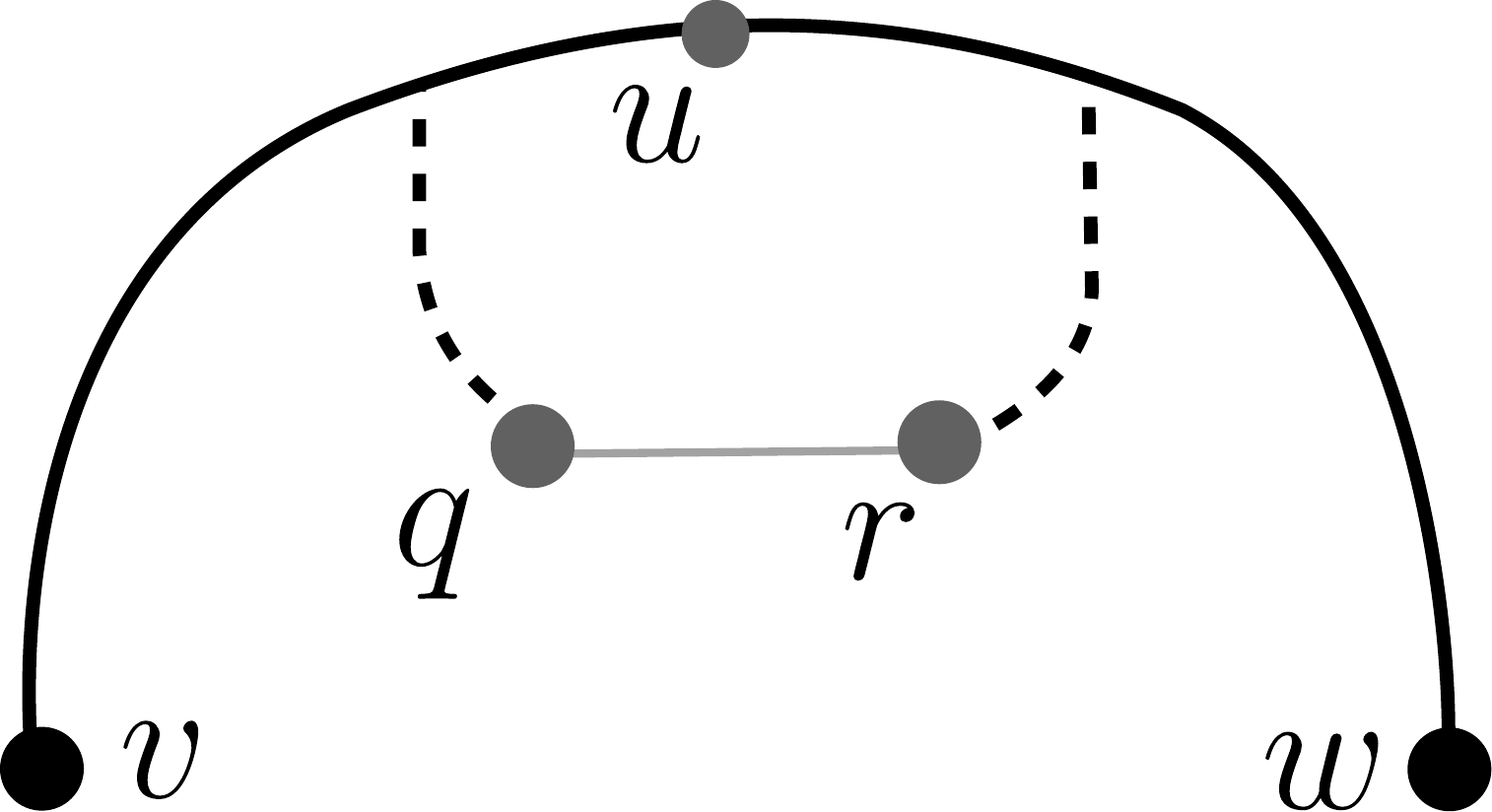}
	\caption{The edge $(q, r)$ covers some node $u$ on the path $v...w$.}
	\label{fig:covering}
\end{figure}

\bigskip
In our case, we are interested in the vertices $u$ whose value $nc(u)$ 
changes. They are exactly those that are still marked as not covered at the end 
of the process. Indeed, if an edge $(q,r)$ covers a vertex $u$ (see Figure~\ref{fig:covering}),
then $v$ and $w$ are still connected in $G[D \setminus u]$, hence the connected
component of $G[D \setminus u]$ do not change. However, if $u$ is not covered by 
any edge, then $v$ and $w$ gets disconnected in $G[D \setminus u]$, thus $nc(u)$
must be updated.
We maintain the $nc( \cdotp)$ values in a 
top-tree, as follows. We call a \emph{segment} a subpath of $v...w$.
The idea is to maintain the non-covered segments and decrease the $nc$ values 
along these at the end of the process.
The top-trees allow us to alter the value of a segment of a path in 
$O(\polylog n)$ time.

\subparagraph*{Computing the list of uncovered segments.}

To find the uncovered segments (in red on Figure~\ref{fig:uncovered}), we sort the covered ones 
and take the complementary.
Let $(q_1, r_1), ..., (q_k, r_k)$ be the nontree edges considered in the execution
of Recover, and let $x_i = LCA(v, q_i)$ and $y_i = LCA(v, r_i)$ (where $LCA(u, v)$
is the lowest common ancestor of $u$ and $v$ in the tree). The covered segments
are exactly the $(x_i, y_i)$. Using lowest common ancestor queries, it is possible
to sort those segments according to the position of $x_i$ along the path $v...w$.
Given the segments in order, it is then possible to determine the uncovered segments
in linear time: they correspond to the complementary of those segments. Answering
a lowest common ancestor query on a dynamic tree can be done in $O(\log n)$ 
(see \cite{sleator1983data}), hence it is possible to sort the covered segments in time
$O(k\log^2 n)$ and to find the uncovered segments with the same complexity.

Since $k$ is the number of edges that move to a higher level during a call to
Recover, and the maximum level is $\log n$, the total complexity of computing
the uncovered segments is at most $\log ^3 n$ per edges. Hence the overall 
complexity is $O(\log^4 n)$, which is the cost of the function Recover.

\bigskip
\begin{figure}[h!] 
	\centering
	\includegraphics[width=0.9\textwidth]{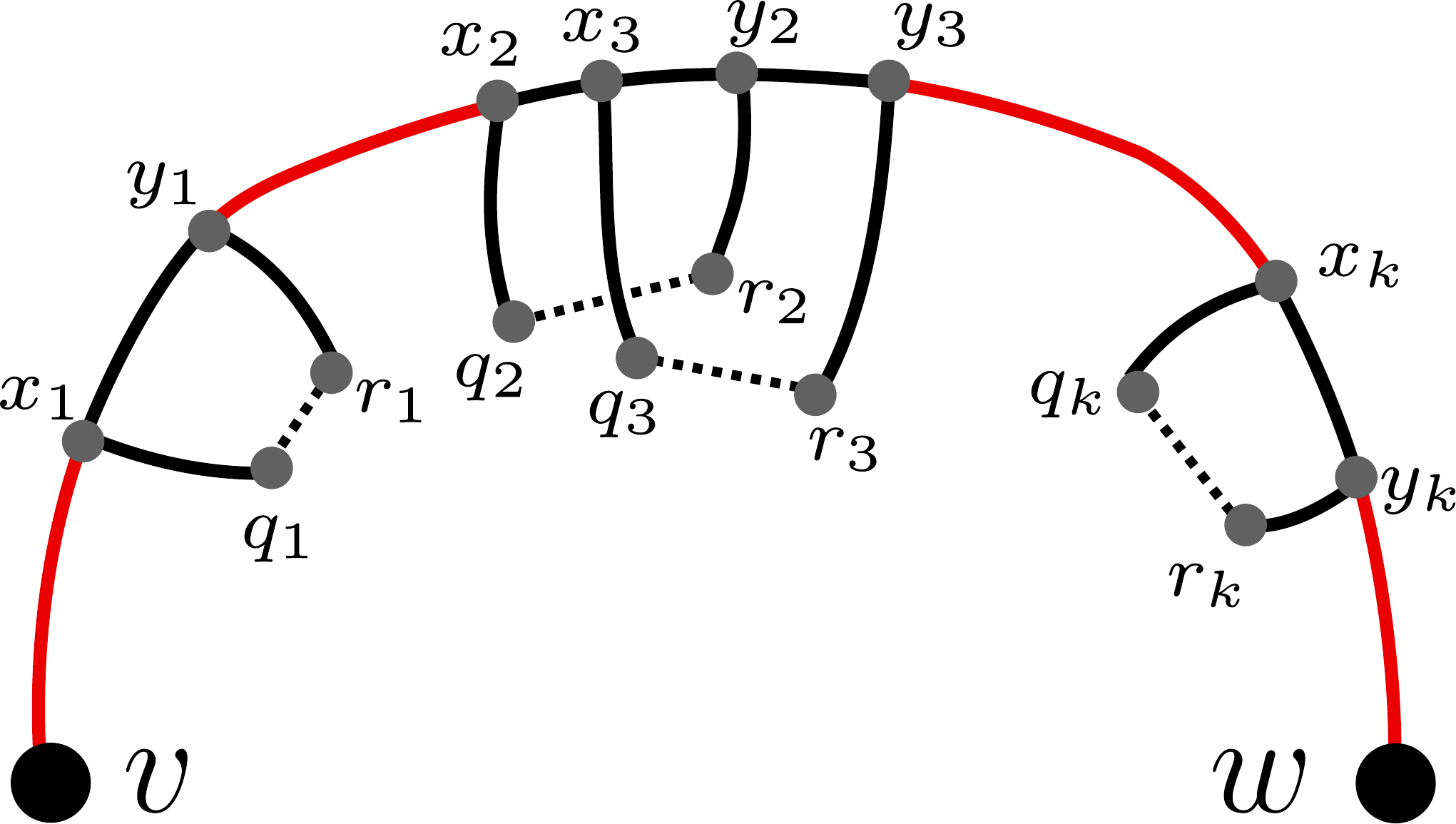}
	\caption{The black segments are covered by edges $(q_i, r_i)$. 
		The red segments are uncovered.}
	\label{fig:uncovered}
\end{figure}

\subparagraph*{\bf Adding an edge.}
To add an edge, two things are required: first decrease some $nc$ value, and 
then query if a vertex has a $nc$ value 1. We have to decrease the $nc$ value 
of a vertex $y$ if and only if its predecessor and its successor along the tree 
path $v...w$ were not connected in $\D \setminus \{y\}$ before the insertion of 
$(v, w)$. 
This turns out to be equivalent to saying that $y$ is not covered: 
thus, the algorithm needs to compute the list of segments along $v...w$ that 
were uncovered before the insertion of $(v, w)$. It then must decrease the $nc$ 
values along these segments, because they become connected. This is analogous to the case of 
an edge deletion: the latter can be 
used the following way. First add the edge $(v, w)$ (and make updates to the 
data structure according to \cite{Holm2001Poly}), then delete it using the 
algorithm from the previous section, with the only difference that, instead of increasing the $nc$ values 
along the uncovered segments, the algorithm decrease them.

It is then easy to find the minimum $nc$ value along the path $v...w$, using 
the top-tree. If this value is 1, we can remove the corresponding vertex from 
$\C$. To remove it, we remove its incident edges one by one, each time updating 
the $nc$ values of the remaining vertices.

\bigskip
The results of this section are summarized in the following lemma.
\begin{lemma}\label{lem:oneedgeupdate}
	After these updates, $\C$ is minimal. Moreover, the algorithm  
	runs in amortized time $\widetilde O(1)$ for a single edge update.
\end{lemma}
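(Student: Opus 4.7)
The plan is to establish the lemma's two claims—minimality of $\C$ and $\widetilde O(1)$ amortized cost per edge update—by leveraging the invariants of the Holm et al.\ top-tree machinery together with Lemma~\ref{lem:oneremoval}. For correctness, the central observation I would make explicit is that $nc(u)$ as maintained equals the number of connected components of $G[\CD \setminus u]$ for every $u \in \C$, while the $+n$ offset for $u \notin \C$ keeps those vertices invisible to the path-minimum queries (every relevant value is strictly below $n$). Consequently, a path-minimum query returning value $1$ produces a vertex of $\C$ whose removal leaves $G[\CD]$ connected, i.e., a non-articulation point.

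After an edge insertion $(v, w)$ into $G[\CD]$, Lemma~\ref{lem:oneremoval} guarantees that any vertex losing articulation-point status lies on the tree path $v\ldots w$ in $F$, and that removing a single one restores articulation for all others; hence a single min-query along $v\ldots w$ is both sufficient and necessary. The step that decreases $nc$ along uncovered segments is then exactly right: a vertex $u$ on the path lies in an uncovered segment iff its two sides on the path were in distinct components of $G[\CD \setminus u]$ before the insertion and are now merged by the new cycle; a vertex inside a covered segment is already bridged by some covering non-tree edge and so has an unchanged $nc$-value. Edge deletions are symmetric: Recover identifies the segments left uncovered after the removal and the algorithm \emph{increases} $nc$ along them, reflecting the components that split. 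Vertices entering or leaving $\C$ are accommodated by a single $\pm n$ adjustment of their own value, keeping the invariant globally consistent.

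For the complexity, each single edge update to $G[\CD]$ triggers (i) one call to Recover, amortized $O(\log^4 n)$ by \cite{Holm2001Poly}; (ii) the sorting of covered segments via LCA queries, also $O(\log^4 n)$ as shown earlier in this section; (iii) a bulk $\pm 1$ update of $nc$ on the uncovered segments using top-tree segment updates, $O(\polylog n)$; and (iv) a path-minimum query, again $O(\polylog n)$. If a vertex is flagged for removal from $\C$, its incident edges are processed one by one, but this is already charged to the amortized $\widetilde O(\Delta)$ budget per vertex of $\CD$ used in Lemma~\ref{lem:simple_complex}. Summing yields $\widetilde O(1)$ amortized per edge update.

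The hardest step I expect is the tight coupling between the covering structure computed by Recover and the combinatorial statement that $u$'s component count changes exactly when $u$ is left uncovered. Making this coupling rigorous amounts to verifying that the final covering produced by Recover faithfully encodes the 2-edge-connectivity structure of the post-update graph; rather than re-derive this, I would import it as a black box from \cite{Holm2001Poly} and combine it with Lemma~\ref{lem:oneremoval} to close the argument.
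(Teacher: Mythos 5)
Your proposal is correct and follows essentially the same route as the paper: the paper does not give a separate proof of this lemma but treats it as a summary of the section, relying on exactly the ingredients you assemble — Lemma~\ref{lem:oneremoval} to localize candidates on the tree path, the coupling between ``uncovered'' vertices in the Recover procedure and changes to $nc(\cdot)$, the $+n$ offset to hide vertices outside $\C$ from path-minimum queries, and the $O(\log^4 n)$ amortized bounds imported from \cite{Holm2001Poly}. The only cosmetic difference is directional: the paper treats deletion as the base case and reduces insertion to it (add then delete, decrementing instead of incrementing), whereas you present insertion directly and call deletion symmetric; the substance is identical.
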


A direct corollary of this lemma and Lemma~\ref{lem:simple_complex} is 
Theorem~\ref{thm:connected}.

\begin{corollary}[Theorem~\ref{thm:connected}]
	The whole algorithm to maintain the Connected DS is correct and runs 
	in time $\widetilde O(\Delta)$
\end{corollary}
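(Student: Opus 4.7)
The plan is to assemble the three ingredients already in place: the black-box DS algorithm of Section~\ref{sec:minimum}, the connectivity-maintenance machinery of Section~\ref{sec:simple} (in particular the amortized bounds from Lemma~\ref{lem:simple_complex}), and the per-edge-update minimality procedure proved in Lemma~\ref{lem:oneedgeupdate}. The correctness part is short: after each graph update, the black box guarantees that $\D$ is a valid dominating set; the connectivity phase (adding at most two vertices after an edge deletion, or the deleted black-box vertex when needed) ensures $G[\CD]$ is connected; and Lemma~\ref{lem:oneedgeupdate} states that running the minimality procedure one edge at a time restores minimality of $\C$. Combining with Lemma~\ref{lem:approx_cds}, the maintained $\CD$ is an $O(\log n)$ approximation of MCDS.

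For the running time, I would argue as follows. First, fix an update to $G$. The black box costs $\widetilde O(\Delta)$ amortized (Theorem~\ref{thm:minimum}) and, by Corollary~\ref{cor:num_change}, produces $\widetilde O(1)$ amortized changes to $\D$. The connectivity phase adds at most two vertices to $\C$ per edge deletion, so the number of \emph{additions} to $\CD$ caused by one graph update is $\widetilde O(1)$ amortized. As in the proof of Lemma~\ref{lem:simple_complex}, I would use the same potential scheme that charges each vertex added to $\CD$ a budget of $\deg(v)$ (maintained under incident edge updates at $O(1)$ per edge) so that deletions from $\CD$ are paid for by the budget accumulated at insertion time. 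Hence the amortized number of vertex changes to $\CD$ per graph update is $\widetilde O(1)$, and the amortized number of resulting edge updates inside $G[\CD]$ is $\widetilde O(\Delta)$.

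Next I would apply Lemma~\ref{lem:oneedgeupdate}: handling each single edge insertion or deletion in $G[\CD]$ (updating the spanning forest $F$, the top-tree, the $nc(\cdot)$ values on the uncovered segments produced by Recover, and possibly removing one vertex from $\C$ if the min $nc$-value on the relevant tree path drops to $1$) runs in $\widetilde O(1)$ amortized time. Multiplying gives $\widetilde O(\Delta)$ amortized per graph update for the minimality phase, which dominates the $\widetilde O(\Delta)$ cost of the connectivity phase (itself bounded by at most a handful of Holm-et-al.\ MST updates per vertex change in $\CD$, plus $\widetilde O(\Delta)$ weight updates on incident edges) and the $\widetilde O(\Delta)$ cost of the black box.

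The main subtlety, and what I expect to require the most care in writing up, is the interaction between the amortization over graph updates and the amortization inside Lemma~\ref{lem:oneedgeupdate}: we charge $\widetilde O(1)$ changes to $\CD$ per graph update, but each such change fires up to $\Delta$ edge-level calls into the top-tree machinery, each of which is itself amortized. I would verify that these two potentials are compatible, i.e.\ that the $\deg(v)$ budget deposited when $v$ enters $\CD$ truly covers both the connectivity data-structure updates and the minimality updates incurred when $v$ later leaves, so that no double-counting occurs and the overall amortized cost per graph update is genuinely $\widetilde O(\Delta)$.
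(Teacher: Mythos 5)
Your proposal is correct and follows essentially the same route as the paper: correctness is inherited from the $\widetilde O(n)$ algorithm together with Lemma~\ref{lem:oneedgeupdate}, and the running time is obtained by replacing the $\widetilde O(n)$ articulation-point scan in Lemma~\ref{lem:simple_complex} with the $\widetilde O(1)$-per-edge minimality procedure, yielding $\widetilde O(\Delta)$ per vertex change in $\CD$ and hence $\widetilde O(\Delta)$ amortized per update. Your closing remark about checking that the $\deg(v)$ budget covers both the connectivity and minimality costs is a reasonable extra care point, but it does not change the argument, which is the one the paper gives.
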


\begin{proof}
	The correctness follows from Lemma~\ref{lem:oneedgeupdate} and from the correctness of the $\widetilde O(n)$ algorithm.
	As for the running time, the only difference from Lemma~\ref{lem:simple_complex} is
	the search for articulation points: this takes $\widetilde O(1)$ for each edge added
	or removed from $\CD$, and consequently $\widetilde O(\Delta)$ for each node added to 
	or removed from $\CD$. 
	This yields that the algorithm takes $\widetilde O(\Delta)$ amortized time per update.
\end{proof}

\section{A $O(\min(\Delta, \sqrt{m})\,)$ amortized algorithm for {\sc Minimal 
		Dominating Set}}
\label{sec:dom-set}

This section presents a faster algorithm if one is only interested in finding a 
\emph{Minimal} DS. This is a DS in which it is not possible to remove a vertex, 
but it can be arbitrarily big. For instance, in a star, the Minimum DS is only one 
vertex (the center), but its complementary is another minimal DS and has size 
$n-1$. This result highlights the difference between MIS and Minimal DS: the 
best known \emph{deterministic} complexity for MIS is $O(m^{2/3})$, whereas we 
present here a $O(\sqrt m)$ algorithm for Minimal DS.

\subparagraph*{Key idea.} When one needs to add a new vertex to the dominating set 
in order to dominate a vertex $v$, he can choose a vertex with degree 
$O(\sqrt m)$, either $v$ or one of its neighbors (a similar idea
appears in Neiman et al. \cite{neiman2016simple}). We present an 
algorithm with complexity proportional to the degree of the vertex added to the 
DS: this will give a $O(\min(\Delta, \sqrt m))$ algorithm. To analyze the complexity,
we follow an argument similar to the one for CDS. At most one vertex is added to
the DS at every step, even though several can be removed. Therefore we can pay 
for the (future) deletion of a vertex at the time it enters the DS.

For a vertex $v$, $N(v)$ is the set of its neighbors, including $v$. 
Let $\D$ be the dominating set maintained by the algorithm. If $v \in \D$ and 
$u \in N(v)$, we say that $v$ \emph{dominates} $u$.

For each vertex 
$v$, the algorithm keeps this sets up-to-date:
\begin{itemize}
	\item let $N_\D(v)$ be the set of neighbors of $v$ that are 
	in the dominating set $\D$, i.e., $N_\D(v) = \D \cap N(v)$
	\item if $v \in \D$, let $\ID(v)$
	be the set of neighbors of $v$ that are dominated only by $v$, i.e., $\ID(v) = 
	\{u \in N(v) ~~|~~~
	|N_\D(u)| = 1\}$
	
\end{itemize}

Note that $N_\D(v)$ and $\ID(v)$ are useful to check, throughout any sequence 
of 
updates, whether 
a  vertex $v$ must be added 
to or removed from the current dominating set.
In particular, if $N_\D(v) = \emptyset$ then $v$ is not dominated by any 
other vertex, and thus it must be included in the dominating set. On the other 
hand, 
if $\ID(v) = \emptyset$, all the 
neighbors of $v$ ($v$ included) are already dominated by some other vertex, and 
thus $v$ could be removed from the dominating set.

\subsection{The algorithm}
We now show how to maintain a minimal dominating set $\D$ and the sets $N_\D(v)$ 
and 
$\ID(v)$, for each vertex $v$, under arbitrary sequences of edge insertions and 
deletions. We first 
describe two basic primitives, which will be used by our insertion and deletion 
algorithms: adding a vertex to and deleting a 
vertex from a dominating set $\D$.

\subparagraph*{Adding a vertex $v$ to $\D$.}
Following some edge insertion or deletion, it may be necessary to add a vertex 
$v$ to the current dominating set $\D$. In this case, we scan all its neighbors 
$u$ and add $v$ to the sets $N_\D(u)$. If before the update $N_\D(u)$ consisted 
of a single vertex, say $w$, we also have to remove $u$ from the set $\ID(w)$, 
since now $u$ is dominated by both $v$ and $w$. If  $\ID(w)$ becomes empty 
after 
this update, we remove $w$ from $\D$ since it is no longer necessary in the 
dominating set.

\subparagraph*{Removing a vertex $v$ from $\D$.}
When a vertex $v$ is removed from the dominating set, we have to remove $v$ 
from all the sets $N_\D(u)$ such that $u \in N(v)$. If after this update 
$N_\D(u)$ consists of a single vertex, say $w$, we add $u$ to $\ID(w)$.

\subparagraph*{Edge insertion.}
Let $(u, v)$ be an edge to be inserted in the graph. We distinguish three cases 
depending on whether $u$ and $v$ are in the dominating set $\D$ before the 
insertion. If neither of them is in the dominating set (i.e., $u \notin \D $ 
and 
$v \notin \D$), then nothing needs to be done. If both are in the dominating 
set 
(i.e., $u \in \D$ and $v \in \D$), then we start by adding $v$ to the set 
$N_\D(u)$. If $u$ was only necessary to dominate itself, we remove $u$ from 
$\D$. Otherwise, we add $u$ to $N_\D(v)$ and perform the same check on $v$.

If only one of them is in the dominating set (say, $u \notin \D $ and $v \in 
\D$), we have to add 
$v$ to the set $N_\D(u)$. As in the case of adding a vertex to $\D$, this 
may cause the removal of another vertex from the dominating set. 
This can happen only if before the insertion, $N_\D(u) = \{w\}$ for some vertex 
$w$ 
and $\ID(w) = \{u\}$: in other terms, $u$ was dominated only by $w$, and $w$ 
was 
in the dominating set only to dominate $u$. Since after the addition of the 
edge 
$(u, v)$ $u$ is also dominated by $v$, $w$ can be removed from the dominating 
set.

\subparagraph*{Edge deletion.}
Let $(u, v)$ be the edge being deleted from the graph. We distinguish again the 
same three cases as before. 
If $u \notin \D $ and $v \notin \D$, nothing needs to be done.
If both $u \in \D$ and $v \in 
\D$, we just have to remove $u$ (resp. $v$) from the sets $N_\D(u)$ and 
$\ID(u)$ (resp. $N_\D(v)$ and $\ID(v)$).

If only one of them is in the dominating set, say 
$u \notin \D $ and $v \in \D$, then we have to 
remove $v$ from $N_\D(u)$. Now, there are two different subcases:
\begin{itemize}
	\item If $N_\D(u) \neq \{v\}$ before the deletion, then nothing needs to be 
	done.
	\item Otherwise, we have to remove $u$ from $\ID(v)$: if $\ID(v) = \emptyset$ 
	after this operation, then we can safely remove $v$ from $\D$. The algorithm 
	must find a new vertex to dominate $u$: we simply add $u$ to the dominating set.
\end{itemize}

\subsection{Running time}
Adding or removing a vertex $v$ from the dominating set can be done in time
$O(deg(v))$, where $deg(v)$ is the degree of $v$ in the current graph. While 
several vertices can be removed from $\D$ at every step, only one can be added 
(following an edge deletion): the amortized complexity of the algorithm is 
therefore $O(\Delta)$, where $\Delta$ is an upper bound on the degree of the 
nodes.

Nevertheless, it is possible to chose the vertex to be added to the dominating  
set more carefully. When the algorithm must find a new vertex to dominate 
vertex 
$u$, it does the following:

\begin{itemize}
	\item If $deg(u) \leq 2\sqrt m + 1$, the algorithm simply adds $u$ to $\D$.
	\item Otherwise, $deg(u)  > 2 \sqrt m + 1$. The algorithms finds a vertex $w 
	\in 
	N(u)$ with $deg(w) \leq \sqrt{m}\,$ and adds $w$ to $\D$. Note that such a 
	vertex $w$ can be found by simply scanning only  $2\sqrt m + 1$ neighbors of 
	$u$, 
	since (by averaging) at least one of them must have degree smaller than $\sqrt{m}$.
\end{itemize}
In both cases, the insertion takes time $O(\min(\Delta, \sqrt{m})\,)$.

When a vertex $v$ is deleted from the dominating set, its degree can be 
potentially larger than $2\sqrt m$. However, when $v$ was added to the 
dominating 
set its degree must have been $O(\sqrt m\,)$: this implies that many edges were 
added to $v$, and we can amortize the work over those edges. More precisely, 
when a vertex $v$ enters the dominating set, we put a budget $deg(v)$ on it. 
Every time an edge incident to $v$ is added to the graph, we increase by one 
this budget, so that when $v$ has to be removed from $\D$, $v$ has a budget 
larger than $deg(v)$ that can be used for the operation.

\bibliography{bibliography}

\begin{thebibliography}{10}

\bibitem{addanki2018fully}
Raghavendra Addanki and Barna Saha.
\newblock Fully dynamic set cover--improved and simple.
\newblock {\em arXiv preprint arXiv:1804.03197}, 2018.

\bibitem{assadi2019fully}
Sepehr Assadi, Krzysztof Onak, Baruch Schieber, and Shay Solomon.
\newblock {\em Fully Dynamic Maximal Independent Set with Sublinear in
  <italic>n</italic> Update Time}, pages 1919--1936.

\bibitem{assadistoc}
Sepehr Assadi, Krzysztof Onak, Baruch Schieber, and Shay Solomon.
\newblock Fully dynamic maximal independent set with sublinear update time.
\newblock In {\em Proceedings of the 50th Annual {ACM} {SIGACT} Symposium on
  Theory of Computing, {STOC} 2018, Los Angeles, CA, USA, June 25-29, 2018},
  2018.

\bibitem{bernstein2019deamortization}
Aaron Bernstein, Sebastian Forster, and Monika Henzinger.
\newblock A deamortization approach for dynamic spanner and dynamic maximal
  matching.
\newblock In {\em Proceedings of the Thirtieth Annual ACM-SIAM Symposium on
  Discrete Algorithms}, pages 1899--1918.

\bibitem{bernstein2016faster}
Aaron Bernstein and Cliff Stein.
\newblock Faster fully dynamic matchings with small approximation ratios.
\newblock In {\em Proceedings of the twenty-seventh annual ACM-SIAM symposium
  on Discrete algorithms}, pages 692--711. Society for Industrial and Applied
  Mathematics, 2016.

\bibitem{bevan1997routing}
Sivakumar~R Bevan~Das and V~Bharghavan.
\newblock Routing in ad-hoc networks using a virtual backbone.
\newblock In {\em Proceedings of the 6th International Conference on Computer
  Communications and Networks (IC3N'97)}, pages 1--20, 1997.

\bibitem{bhattacharya2018dynamic}
Sayan Bhattacharya, Deeparnab Chakrabarty, Monika Henzinger, and Danupon
  Nanongkai.
\newblock Dynamic algorithms for graph coloring.
\newblock In {\em Proceedings of the Twenty-Ninth Annual ACM-SIAM Symposium on
  Discrete Algorithms}, pages 1--20. Society for Industrial and Applied
  Mathematics, 2018.

\bibitem{bhattacharya2018deterministic}
Sayan Bhattacharya, Monika Henzinger, and Giuseppe~F Italiano.
\newblock Deterministic fully dynamic data structures for vertex cover and
  matching.
\newblock {\em SIAM Journal on Computing}, 47(3):859--887, 2018.

\bibitem{butenko2004new}
Sergiy Butenko, Xiuzhen Cheng, Carlos~A Oliveira, and Panos~M Pardalos.
\newblock A new heuristic for the minimum connected dominating set problem on
  ad hoc wireless networks.
\newblock In {\em Recent developments in cooperative control and optimization},
  pages 61--73. Springer, 2004.

\bibitem{cheng2003polynomial}
Xiuzhen Cheng, Xiao Huang, Deying Li, Weili Wu, and Ding-Zhu Du.
\newblock A polynomial-time approximation scheme for the minimum-connected
  dominating set in ad hoc wireless networks.
\newblock {\em Networks: An International Journal}, 42(4):202--208, 2003.

\bibitem{greedysetcover}
V.~Chvatal.
\newblock A greedy heuristic for the set-covering problem.
\newblock {\em Mathematics of Operations Research}, 4(3):233--235, 1979.
\newblock URL: \url{http://www.jstor.org/stable/3689577}.

\bibitem{Demetrescu}
Camil Demetrescu and Giuseppe~F. Italiano.
\newblock A new approach to dynamic all pairs shortest paths.
\newblock {\em J. ACM}, 51(6):968--992, 2004.

\bibitem{duwan}
Ding-Zhu Du and Peng-Jun Wan.
\newblock {\em Connected dominating set: theory and applications}, volume~77.
\newblock Springer Science \& Business Media, 2012.

\bibitem{feige1998threshold}
Uriel Feige.
\newblock A threshold of ln n for approximating set cover.
\newblock {\em Journal of the ACM (JACM)}, 45(4):634--652, 1998.

\bibitem{GareyJ79}
M.~R. Garey and David~S. Johnson.
\newblock {\em Computers and Intractability: {A} Guide to the Theory of
  NP-Completeness}.
\newblock W. H. Freeman, 1979.

\bibitem{guha1998approximation}
Sudipto Guha and Samir Khuller.
\newblock Approximation algorithms for connected dominating sets.
\newblock {\em Algorithmica}, 20(4):374--387, 1998.

\bibitem{guibas2013connected}
Leonidas Guibas, Nikola Milosavljevi{\'c}, and Arik Motskin.
\newblock Connected dominating sets on dynamic geometric graphs.
\newblock {\em Computational Geometry}, 46(2):160--172, 2013.

\bibitem{gupta2017online}
Anupam Gupta, Ravishankar Krishnaswamy, Amit Kumar, and Debmalya Panigrahi.
\newblock Online and dynamic algorithms for set cover.
\newblock In {\em Proceedings of the 49th Annual ACM SIGACT Symposium on Theory
  of Computing}, pages 537--550. ACM, 2017.

\bibitem{gupta2018simple}
Manoj Gupta and Shahbaz Khan.
\newblock Simple dynamic algorithms for maximal independent set and other
  problems.
\newblock {\em arXiv preprint arXiv:1804.01823}, 2018.

\bibitem{gupta2013fully}
Manoj Gupta and Richard Peng.
\newblock Fully dynamic (1+ e)-approximate matchings.
\newblock In {\em Foundations of Computer Science (FOCS), 2013 IEEE 54th Annual
  Symposium on}, pages 548--557. IEEE, 2013.

\bibitem{haastad1999clique}
Johan H{\aa}stad.
\newblock Clique is hard to approximate withinn 1- $\varepsilon$.
\newblock {\em Acta Mathematica}, 182(1):105--142, 1999.

\bibitem{Holm2001Poly}
Jacob Holm, Kristian de~Lichtenberg, and Mikkel Thorup.
\newblock Poly-logarithmic deterministic fully-dynamic algorithms for
  connectivity, minimum spanning tree, 2-edge, and biconnectivity.
\newblock {\em J. ACM}, 48(4):723--760, 2001.

\bibitem{jia2002efficient}
Lujun Jia, Rajmohan Rajaraman, and Torsten Suel.
\newblock An efficient distributed algorithm for constructing small dominating
  sets.
\newblock {\em Distributed Computing}, 15(4):193--205, 2002.

\bibitem{kann1992approximability}
Viggo Kann.
\newblock {\em On the approximability of NP-complete optimization problems}.
\newblock PhD thesis, Royal Institute of Technology Stockholm, 1992.

\bibitem{kuhn2005constant}
Fabian Kuhn and Roger Wattenhofer.
\newblock Constant-time distributed dominating set approximation.
\newblock {\em Distributed Computing}, 17(4):303--310, 2005.

\bibitem{nanongkai2017dynamic}
D.~Nanongkai, T.~Saranurak, and C.~Wulff-Nilsen.
\newblock Dynamic minimum spanning forest with subpolynomial worst-case update
  time.
\newblock In {\em 2017 IEEE 58th Annual Symposium on Foundations of Computer
  Science (FOCS)}, pages 950--961, Oct 2017.

\bibitem{neiman2016simple}
Ofer Neiman and Shay Solomon.
\newblock Simple deterministic algorithms for fully dynamic maximal matching.
\newblock {\em ACM Transactions on Algorithms (TALG)}, 12(1):7, 2016.

\bibitem{sampathkumar1979connected}
E~Sampathkumar and HB~Walikar.
\newblock The connected domination number of a graph.
\newblock {\em J. Math. Phys}, 1979.

\bibitem{sleator1983data}
Daniel~D Sleator and Robert~Endre Tarjan.
\newblock A data structure for dynamic trees.
\newblock {\em Journal of computer and system sciences}, 26(3):362--391, 1983.

\bibitem{solomon2016fully}
S.~Solomon.
\newblock Fully dynamic maximal matching in constant update time.
\newblock In {\em 2016 IEEE 57th Annual Symposium on Foundations of Computer
  Science (FOCS)}, 2016.

\bibitem{wu1999calculating}
Jie Wu and Hailan Li.
\newblock On calculating connected dominating set for efficient routing in ad
  hoc wireless networks.
\newblock In {\em Proceedings of the 3rd international workshop on Discrete
  algorithms and methods for mobile computing and communications}, pages 7--14.
  ACM, 1999.

\end{thebibliography}

\end{document}